\begin{document}

\title{Finding a Maximum Common (Induced) Subgraph: \\ Structural Parameters Revisited\thanks{%
Partially supported
by JSPS KAKENHI Grant Numbers 
JP22H00513, 
JP24H00697, 
JP25K03076, 
JP25K03077, 
by JST, CRONOS, Japan Grant Number JPMJCS24K2, 
and
by JST SPRING, Grant Number JPMJSP2125. 
}}
\titlerunning{Finding an MC(I)S: Structural Parameters Revisited}
%
\author{
Tesshu Hanaka\inst{1}\orcidID{0000-0001-6943-856X} \and
Yuto Okada\inst{2}\orcidID{0000-0002-1156-0383} \and
Yota Otachi\inst{2}\orcidID{0000-0002-0087-853X} \and
Lena Volk\inst{3}\orcidID{0009-0004-3113-9205}
}

\authorrunning{T. Hanaka et al.}
%
\institute{%
Kyushu University, Fukuoka, Japan.
 \email{hanaka@inf.kyushu-u.ac.jp}
\and
Nagoya University, Nagoya, Japan.
 \email{pv.20h.3324@s.thers.ac.jp}, 
 \email{otachi@nagoya-u.jp}
\and
Technische Universit\"{a}t Darmstadt, Germany.
 \email{volk@mathematik.tu-darmstadt.de}
}
\maketitle              
\begin{abstract}

We study the parameterized complexity of the problems of finding a maximum common (induced) subgraph of two given graphs.
Since these problems generalize several NP-complete problems, 
they are intractable even when parameterized by strongly restricted structural parameters.
Our contribution in this paper is to sharply complement the hardness of the problems
by showing fixed-parameter tractable cases: 
both induced and non-induced problems parameterized by max-leaf number and by neighborhood diversity,
and the induced problem parameterized by twin cover number.
These results almost completely determine the complexity of the problems with respect to well-studied structural parameters.
Also, the result on the twin cover number presents a rather rare example where the induced and non-induced cases have different complexity.

\keywords{Maximum common (induced) subgraph \and Structural parameter \and Fixed-parameter tractability \and Twin cover \and Max-leaf number.}
\end{abstract}



\section{Introduction}

In this paper, we study structural parameterizations of {\probMCS} and {\probMCIS} and resolve a few new cases:
twin cover number, max-leaf number, and neighborhood diversity.
Our results almost completely determine the complexity of the problems in the hierarchy of well-studied graph parameters~\cite{SorgeW19},
while one important case of the induced version parameterized by cluster vertex deletion number remains unsettled.

Given graphs $G_{1}$, $G_{2}$ and an integer $h$,
{\probMCS} (\probMCSshort) asks whether there exists a graph $H$ with at least $h$ \emph{edges} such that both~$G_{1}$ and $G_{2}$ contain a subgraph isomorphic to~$H$.
{\probMCIS} (\probMCISshort) is a variant of {\probMCSshort} that asks for a common \emph{induced} subgraph with at least $h$ \emph{vertices}.
These problems are known to be intractable even in highly restricted settings.
Indeed, most of the known hardness results hold already for their special cases {\probSI} and {\probISI} as described later.

Given graphs $G$ and $H$, {\probSI} (\probSIshort) asks whether $G$ contains a subgraph isomorphic to $H$.
{\probISI} (\probISIshort) is a variant of {\probSIshort} that asks the existence of an \emph{induced} subgraph of $G$ isomorphic to $H$.
By setting $G_{1} = G$, $G_{2} = H$, and $h = |E(H)|$ ($h = |V(H)|$),
we can see that {\probSIshort} ({\probISIshort}) is a special case of {\probMCSshort} ({\probMCISshort}, respectively).
The problems {\probSIshort} and {\probISIshort} (and thus {\probMCSshort} and {\probMCISshort} as well) are NP-complete since they generalize many other NP-complete problems.
For example, if $H$ is a complete graph, then {\probSIshort} and {\probISIshort} coincide with \textsc{Clique}~\cite[GT19]{GareyJ79}.

\subsection{Background of the target setting}
\label{sec:background}
The problems {\probMCSshort}, {\probMCISshort}, {\probSIshort}, and {\probISIshort} have been studied extensively in many different settings. 
In this paper, we focus on the setting where \emph{both} input graphs have the same restriction on their structures.

We can see that {\probSIshort} and {\probISIshort} are NP-complete even if both $G$ and $H$ are path forests (i.e., disjoint unions of paths).
The hardness for {\probSIshort} appeared in an implicit way already in the book of Garey and Johnson~\cite[p.~105]{GareyJ79}.
The hardness for {\probISIshort} can be shown basically in the same way by adding a small number of vertices to each connected component of $G$ as separators~\cite{Damaschke90}.
Note that these hardness results imply NP-completeness of the case where both graphs are of 
bandwidth~$1$, feedback edge set number~$0$, and distance to path forest~$0$.
In this direction, we may ask the following question about their common special case, the max-leaf number.
\begin{quote}
  Q: \textit{Are {\probMCSshort} and {\probMCISshort} tractable when parameterized by max-leaf numbers of both input graphs?}
\end{quote}
By answering this question in the affirmative, we complement the hardness results in a sharp way.

For the case where both $G$ and $H$ are disjoint unions of complete graphs, the NP-completeness of {\probSIshort} follows directly from the path forest case
by replacing each connected component with a complete graph with the same number of vertices.
This implies that {\probSIshort} is NP-complete when both graphs are of twin cover number~$0$.
One of the main motivating questions in this work is whether such a reduction is possible for the induced case {\probISIshort}.
At least the same reduction does not work as we cannot embed two or more disjoint cliques into one clique as an induced subgraph.
Actually, it is not difficult to see that if both $G$ and $H$ are disjoint unions of complete graphs, {\probISIshort} is polynomial-time solvable.
On the other hand, {\probISIshort} is known to be NP-complete even on cographs~\cite{Damaschke90} (i.e., graphs of modular-width~$2$).
Now the question here is as follows.
\begin{quote}
  Q: \textit{Can we solve {\probMCISshort} efficiently when input graphs are close to a disjoint union of complete graphs in some sense?}
\end{quote}
We partially answer this question by presenting a fixed-parameter algorithm parameterized by twin cover number.

In the context of structural parameters, 
Abu-Khzam~\cite{Abu-Khzam14} showed that {\probMCISshort} is fixed-parameter tractable parameterized by $\vc(G_{1}) + \vc(G_{2})$,
where $\vc$ denotes the vertex cover number of a graph (see also \cite{Abu-KhzamBS17}).
This result was later generalized by Gima et al.~\cite{GimaHKKO22},
who showed that both {\probMCSshort} and {\probMCISshort} are fixed-parameter tractable parameterized by $\vi(G_{1}) + \vi(G_{2})$,
where $\vi$ denotes the vertex integrity of a graph.
On the other hand, Bodlaender et al.~\cite{BodlaenderHKKOOZ20} showed that {\probSIshort} is NP-complete on forests of treedepth~$3$.
Their proof can be easily modified to show that {\probISIshort} is NP-complete on the same class~\cite{GimaHKKO22}.
Note that $\td(G) \le \vi(G) \le \vc(G) + 1$ for every graph $G$, where $\td$ denotes the treedepth of a graph.
Bodlaender et al.~\cite{BodlaenderHKKOOZ20} also showed that 
{\probSIshort} is fixed-parameter tractable parameterized by $\nd(G) + \nd(H)$,
where $\nd$ denotes the neighborhood diversity of a graph, which is another generalization of vertex cover number
in the sense that $\nd(G) \le \vc(G) + 2^{\vc(G)}$ for every graph $G$~\cite{Lampis12}.
A natural question would be whether we can get the same results for {\probMCSshort} and {\probMCISshort}.
\begin{quote}
  Q: \textit{Are {\probMCSshort} and {\probMCISshort} tractable when parameterized by neighborhood diversity of both input graphs?}
\end{quote}
We show that the idea of the previous algorithm~\cite{BodlaenderHKKOOZ20} can be applied to {\probMCSshort} and {\probMCISshort} almost directly.


\subsection{Our results}

As mentioned above, our results can be summarized as follows.
\begin{enumerate}
  \item {\probMCSshort} and {\probMCISshort} are fixed-parameter tractable parameterized by $\ml(G_{1}) + \ml(G_{2})$, where $\ml$ denotes the max-leaf number of a graph.
  \item {\probMCISshort} is fixed-parameter tractable parameterized by $\tc(G_{1}) + \tc(G_{2})$, where $\tc$ denotes the twin cover number of a graph.
  \item {\probMCSshort} and {\probMCISshort} are fixed-parameter tractable parameterized by $\nd(G_{1}) + \nd(G_{2})$, where $\nd$ denotes the neighborhood diversity of a graph.
\end{enumerate}
Note that although the problems {\probMCSshort} and {\probMCISshort} are defined as decision problems, our positive results can be easily modified to output optimal solutions. 

See \cref{fig:graph-parameters} for the summary of the results.\footnote{
In \cref{fig:graph-parameters}, the abbreviations mean
clique-width ($\cw$),
treewidth ($\tw$),
pathwidth ($\pw$),
feedback vertex set number ($\fvs$),
feedback edge set number ($\fes$),
distance to path forest ($\dpf$),
bandwidth ($\bw$), 
modular-width ($\mw$),
shrub-depth ($\sd$), and
cluster vertex deletion number ($\cvd$).
}
Formal definitions of the graph parameters will be given as needed.

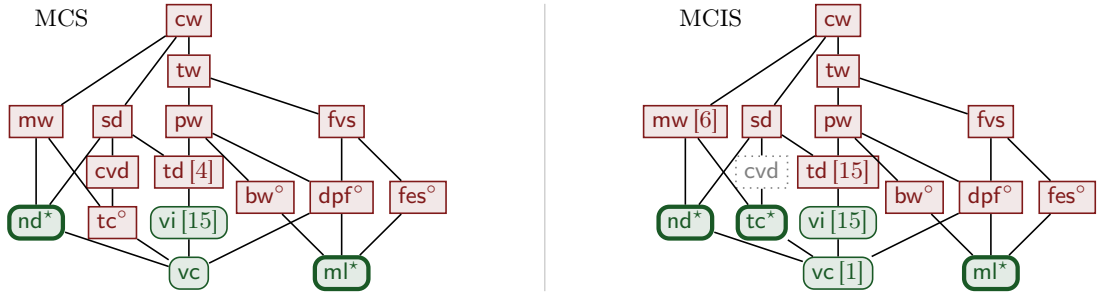
\begin{figure}[th]

{\centering

\definecolor{tkzdarkred}{rgb}{.5,.1,.1}
\definecolor{tkzdarkorange}{rgb}{.7,.25,0}
\definecolor{tkzdarkblue}{rgb}{.1,.1,.5}
\definecolor{tkzdarkgreen}{rgb}{.1,.35,.15}

\tikzset{npc/.style = {draw,semithick,rectangle,tkzdarkred,fill=tkzdarkred!10,align=center}}
\tikzset{whd/.style = {draw,semithick,rectangle,tkzdarkorange,align=center}}
\tikzset{fpt/.style = {draw,semithick,rectangle,rounded corners,tkzdarkgreen,fill=tkzdarkgreen!12,align=center}}
\tikzset{unk/.style = {draw,semithick,rectangle,dotted,gray,align=center}}

{}
\hfill
\begin{tikzpicture}[semithick, every node/.style={font=\small,text height=1.3ex, text depth = 0.2ex},scale=0.67]
  \node[npc] (cw) at (0cm, 5.0cm) {$\cw$};
  \node[npc] (tw) at (0cm, 4.0cm) {$\tw$};
  \node[npc] (pw) at (0cm, 3.0cm) {$\pw$};
  \node[npc] (td) at (0cm, 2.0cm) {$\td$\,\cite{BodlaenderHKKOOZ20}};
  \node[npc] (sd) at (-1.5cm, 3cm) {$\sd$};
  \node[fpt] (vi) at (0cm, 1.0cm) {$\vi$\,\cite{GimaHKKO22}};
  \node[fpt] (vc) at (0cm, 0.0cm) {$\vc$};
  \node[npc] (mw) at (-3cm, 3.0cm) {$\mw$};
  \node[fpt, ultra thick] (nd) at (-3cm, 1.0cm) {$\nd$\thispaper};
  \node[npc] (cvd) at (-1.5cm, 2cm) {$\cvd$};
  \node[npc] (tc) at (-1.5cm, 1cm) {$\tc$\easyobs};
  \node[npc] (fvs) at (3cm, 3cm) {$\fvs$};
  \node[npc] (fes) at (4.5cm, 1.5cm) {$\fes$\easyobs};
  \node[npc] (bw) at (1.5cm, 1.5cm) {$\bw$\easyobs};
  \node[npc] (dpf) at (3cm, 1.5cm) {$\dpf$\easyobs};
  \node[fpt, ultra thick] (mln) at (3cm, 0.0cm) {$\ml$\thispaper};

  \draw (cw) -- (mw) -- (nd) -- (vc);
  \draw (cw) -- (sd) -- (cvd) -- (tc) -- (vc);
  \draw (sd) -- (td);
  \draw (sd) -- (nd);
  \draw (mw) -- (tc);
  \draw (cw) -- (tw) -- (pw) -- (td) -- (vi) -- (vc);
  \draw (tw) -- (fvs);
  \draw (fvs) -- (fes) -- (mln);
  \draw (pw) -- (bw) -- (mln);
  \draw (pw) -- (dpf) -- (vc);
  \draw (fvs) -- (dpf) -- (mln);
  
  \node (caption) at (-2.5cm, 5cm) {{\probMCSshort}};
\end{tikzpicture} 
\hfill
\textcolor{lightgray}{\vrule}
\hfill
\begin{tikzpicture}[semithick, every node/.style={font=\small,text height=1.3ex, text depth = 0.2ex},scale=0.67]
  \node[npc] (cw) at (0cm, 5.0cm) {$\cw$};
  \node[npc] (tw) at (0cm, 4.0cm) {$\tw$};
  \node[npc] (pw) at (0cm, 3.0cm) {$\pw$};
  \node[npc] (td) at (0cm, 2.0cm) {$\td$\,\cite{GimaHKKO22}};
  \node[npc] (sd) at (-1.5cm, 3cm) {$\sd$};
  \node[fpt] (vi) at (0cm, 1.0cm) {$\vi$\,\cite{GimaHKKO22}};
  \node[fpt] (vc) at (0cm, 0.0cm) {$\vc$\,\cite{Abu-Khzam14}};
  \node[npc] (mw) at (-3cm, 3.0cm) {$\mw$\,\cite{Damaschke90}};
  \node[fpt, ultra thick] (nd) at (-3cm, 1.0cm) {$\nd$\thispaper};
  \node[unk] (cvd) at (-1.5cm, 2cm) {$\cvd$};
  \node[fpt, ultra thick] (tc) at (-1.5cm, 1cm) {$\tc$\thispaper};
  \node[npc] (fvs) at (3cm, 3cm) {$\fvs$};
  \node[npc] (fes) at (4.5cm, 1.5cm) {$\fes$\easyobs};
  \node[npc] (bw) at (1.5cm, 1.5cm) {$\bw$\easyobs};
  \node[npc] (dpf) at (3cm, 1.5cm) {$\dpf$\easyobs};
  \node[fpt, ultra thick] (mln) at (3cm, 0.0cm) {$\ml$\thispaper};

  \draw (cw) -- (mw) -- (nd) -- (vc);
  \draw (cw) -- (sd) -- (cvd) -- (tc) -- (vc);
  \draw (sd) -- (td);
  \draw (sd) -- (nd);
  \draw (mw) -- (tc);
  \draw (cw) -- (tw) -- (pw) -- (td) -- (vi) -- (vc);
  \draw (tw) -- (fvs);
  \draw (fvs) -- (fes) -- (mln);
  \draw (pw) -- (bw) -- (mln);
  \draw (pw) -- (dpf) -- (vc);
  \draw (fvs) -- (dpf) -- (mln);
  
  \node (caption) at (-2.5cm, 5cm) {{\probMCISshort}};
\end{tikzpicture}
\hfill
{}
}
\caption{The complexity of {\probMCSshort} (left) and {\probMCISshort} (right) when a structural parameter of both input graphs is bounded.
The normal rectangles and the rounded rectangles represent paraNP-complete cases and fixed-parameter tractable cases, respectively.
The results marked with {$\star$} are shown in this paper
and the ones with {$\circ$} are corollaries of the observations in \cref{sec:background}.
A connection between two parameters means that the one above is upper-bounded by a function of the one below (e.g., $\tw(G) \le \pw(G)$).}
\label{fig:graph-parameters}
\end{figure}

\subsection{Related results}
\label{sec:related-results}

Marx and Pilipczuk~\cite{MarxP14} studied the parameterized complexity of {\probSIshort}
and presented comprehensive results for many combinations of (possibly different) structural parameters of $G$ and $H$.
For the setting where both input graphs satisfy the same condition,
restricting the graph class that they belong to is another natural direction.
In this setting, Kijima et al.~\cite{KijimaOSU12} studied {\probSIshort}
and Heggernes et al.~\cite{HeggernesHMV15} studied {\probISIshort} 
both on interval graphs and related graph classes.

Jansen and Marx~\cite{JansenM15} considered {\probSIshort} in the setting where only $H$ belongs to a restricted hereditary graph class
and presented a dichotomy between randomized polynomial-time solvable cases and NP-complete cases.

If we restrict a structural parameter of $H$ only, almost all studied cases are known to be hard.
Both {\probSIshort} and {\probISIshort} are already W[1]-hard parameterized by $|V(H)|$ 
as \textsc{Clique} parameterized by the solution size~\cite{DowneyF99} reduces to this case.
We can see that {\probISIshort} is already NP-complete
when $\vc(H) = 0$ (\textsc{Independent Set}~\cite[GT20]{GareyJ79}) and
when $\ml(H) = 2$ (\textsc{Induced Path}~\cite[GT23]{GareyJ79}).
For {\probSIshort}, we can see that it  is NP-complete 
when $\nd(H) = 1$ and $\tc(H) = 0$ (\textsc{Clique}~\cite[GT19]{GareyJ79}),
when $\ml(H) = 2$ (\textsc{Hamiltonian Path}~\cite[GT39]{GareyJ79}), and 
when $\vi(H) = 3$ and $\tc(H) = 0$ (\textsc{Partition Into Triangles}~\cite[GT11]{GareyJ79}).
The only positive result known in this setting is that {\probSIshort} belongs to XP parameterized by $\vc(H)$;
indeed, a result of Bodlaender et al.~\cite[Theorem 14]{BodlaenderHJOOZ25} implies that {\probMCSshort} belongs to XP parameterized by $\min\{\vc(G_{1}), \vc(G_{2})\}$.
%


\section{Preliminaries}
\label{sec:pre}
We assume that the reader is familiar with the concept of fixed-parameter tractability.
See standard textbooks (e.g., \cite{CyganFKLMPPS15,DowneyF99,DowneyF13,FlumG06,Niedermeier06}) for the terms not defined in this paper.

Let $G$ be a graph. We denote by $\Delta(G)$ the maximum degree of $G$.
The set of connected component of $G$ is denoted by $\cc(G)$.
For a vertex $v \in V(G)$, the (\emph{open}) \emph{neighborhood} of $v$ is $N_{G}(v) = \{u \mid \{u,v\} \in E(G)\}$.
For a set $S \subseteq V(G)$, let $N_{G}(S) = \bigcup_{v \in S} N_{G}(v) \setminus S$.
We omit the subscript $G$ when the graph~$G$ is clear from the context.
For $S \subseteq V(G)$, we denote by $G[S]$ the subgraph of $G$ induced by $S$,
and we denote the subgraph $G[V(G) \setminus S]$ by $G - S$.

Let $G$ be a graph.
Two vertices $u, v \in V(G)$ are \emph{twins} if $N(u) \setminus \{v\} = N(v) \setminus \{u\}$.
That is, two vertices are twins if they have the same neighborhood when ignoring the adjacency between them.
Clearly, being twins is an equivalence relation.
A \emph{twin class} of $G$ is a maximal set of twins in $G$.
Observe that a twin class in a graph is either a clique or an independent set of the graph.

Let $G$ and $H$ be graphs.
An injective mapping $\eta \colon V(H) \to V(G)$ is a \emph{subgraph isomorphism} (resp.\ an \emph{induced subgraph isomorphism}) from $H$ to $G$
if for $u,v \in V(H)$, $\{u, v\} \in E(H)$ only if (resp.\ if and only if) $\{\eta(u), \eta(v)\} \in E(G)$.

For a positive integer $n$, we denote by $[n]$ the set of positive integers not greater than $n$, i.e., $[n] = \{1,\dots,n\}$.
Also, let $[n]_{0} = [n] \cup \{0\}$.


\section{{\probMCISshort} parameterized by twin cover number}

Let $G$ be a graph. An edge between twin vertices of $G$ is a \emph{twin edge} in $G$.
If an edge is not a twin edge, we call it a \emph{non-twin edge}.
A set $S \subseteq V(G)$ is a \emph{twin cover} of $G$ if every non-twin edge of $G$ has at least one of its endpoints in~$S$.
In other words, $S$ is a twin cover of $G$ if and only if $S$ is a vertex cover of~$G - F$, where $F$ is the set of twin edges of $G$.
Note that for a twin cover $S$, each connected component $K$ of $G - S$ is a complete graph and each vertex in~$K$ has the same neighborhood in $S$.

The \emph{twin cover number} of $G$, denoted $\tc(G)$, is the minimum size of a twin cover of $G$.
Since finding a minimum twin cover is fixed-parameter tractable parameterized by $\tc(G)$~\cite{Ganian15},
we assume that a minimum twin cover is given as part of the input when designing a fixed-parameter algorithm parameterized by~$\tc(G)$.

\begin{theorem}
\label{thm:mcis-tc}
{\probMCIS} is fixed-parameter tractable parameterized by $\tc(G_{1})+\tc(G_{2})$.
\end{theorem}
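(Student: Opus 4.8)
The plan is to first pin down the shape of an optimal common induced subgraph relative to the two twin covers, then to enumerate finitely many ``boundary configurations'' and, for each, to reduce the remaining choice to a polynomially solvable assignment problem. Throughout, let $S_1,S_2$ be the given minimum twin covers of $G_1,G_2$, let $k_i=\tc(G_i)=|S_i|$, and $k=k_1+k_2$; recall that each component of $G_i-S_i$ is a clique whose vertices share one neighborhood in $S_i$. The structural lemma I would prove is this: if $H$ is a common induced subgraph with induced subgraph isomorphisms $\eta_i\colon V(H)\to V(G_i)$ and $X=\eta_1^{-1}(S_1)\cup\eta_2^{-1}(S_2)$ (so $|X|\le k$), then $H-X$ is a disjoint union of cliques, each clique $C$ of $H-X$ is mapped by each $\eta_i$ into a single component of $G_i-S_i$, and all vertices of $C$ have the same neighborhood $B_C\subseteq X$ in $X$. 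For $u,v\in V(H)\setminus X$ this holds because $\eta_i(u),\eta_i(v)$ lie outside $S_i$, so $\{u,v\}\in E(H)$ iff $\eta_i(u),\eta_i(v)$ are in the same component of $G_i-S_i$, and ``being in the same component'' is an equivalence relation; for $x\in X$ and $v\in C$, whether $\{x,v\}\in E(H)$ depends only on whether $\eta_1(x)$ lies in the attachment set of $C$'s component or inside that component, not on $v$. Hence $H$ is determined up to isomorphism by $H[X]$, the way $X$ is placed into $G_1$ and $G_2$, and the multiset of pairs $(|C|,B_C)$ over the cliques $C$ of $H-X$.

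Next I would enumerate. For $m=0,\dots,k$ we guess: the graph $H[X]$ on $m$ vertices; for each vertex of $X$ whether it lies in $S_1':=\eta_1^{-1}(S_1)$, in $S_2':=\eta_2^{-1}(S_2)$, or in both (at least one holds); the injections $\eta_1|_{S_1'}\colon S_1'\to S_1$ and $\eta_2|_{S_2'}\colon S_2'\to S_2$; and, for every vertex of $S_2'\setminus S_1'$ (which is mapped into $G_1-S_1$), the attachment set in $2^{S_1}$ of its component together with a partition of $S_2'\setminus S_1'$ recording which of these vertices share a component, and symmetrically for $S_1'\setminus S_2'$ in $G_2-S_2$. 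This amounts to $2^{O(k^2)}$ guesses, and each can be tested in polynomial time for consistency with $H[X]$, with $G_1[\eta_1(S_1')]$, and with $G_2[\eta_2(S_2')]$; inconsistent guesses are discarded. A surviving guess fixes, in each $G_i$, a set of at most $k$ \emph{boundary} components of $G_i-S_i$, namely those receiving images of $X$, and leaves every other (\emph{clean}) component free to host cliques $C$ of $H-X$ that avoid $X$.

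Finally I would solve, for each surviving guess, the optimization of $|X|+\sum_C|C|$ over the choice of the cliques $C$ (their sizes and their sets $B_C$) subject to embeddability in both graphs consistent with the guess. A clique $C$ hosted in a clean component $Q$ of $G_1-S_1$ with attachment $A$ must satisfy $|C|\le|Q|$ and $B_C=(\eta_1|_{S_1'})^{-1}(A\cap\eta_1(S_1'))$, a set depending only on $A$ and the guess; the analogous condition on the $G_2$ side forces $B_C\subseteq S_1'\cap S_2'$ whenever $C$ is hosted in clean components on both sides, while each of the $O(k)$ boundary components may additionally host one clique $C$ with its size budget reduced by the number of $X$-images it already contains. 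Bucketing the clean components of each side by the induced ``type'' in $2^{S_1'\cap S_2'}$, the bulk of the problem decouples over types, and within one type an exchange argument shows that pairing the clean components of the two sides in decreasing order of size is optimal; this is computable in polynomial time. The $O(k)$ boundary components, and the (few relevant) clean components whose type is not contained in $S_1'\cap S_2'$, are absorbed into the enumeration by additionally guessing which clique, if any, each of them hosts. Reporting the maximum value over all guesses decides {\probMCISshort}, and an optimal $H$ together with its embeddings can be reconstructed.

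I expect the main obstacle to be exactly this last step: boundary and clean components compete for the same stock of cliques, so a large clique of a given attachment type can be worth more as a clean component paired across the two sides than as a boundary component hosting a clique, or conversely, and one must argue --- via the bounded extra guessing for the $O(k)$ boundary components combined with the greedy exchange for the clean part --- that the global optimum is still found in polynomial time. The structural lemma itself and the full list of consistency checks are routine but somewhat lengthy to state precisely.
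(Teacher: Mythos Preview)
Your structural lemma and the enumeration of $H[X]$ together with the placement data of $X$ are correct, and the overall strategy is sound. However, the approach differs from the paper's, and the final optimization step has a real gap that you yourself flag but do not close.

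The key divergence is the choice of pivot. The paper does \emph{not} work with $X=\eta_1^{-1}(S_1)\cup\eta_2^{-1}(S_2)$; it instead guesses a \emph{minimum} twin cover $T$ of $H$ and the restrictions $\eta_i|_T$. The payoff is Claim~3.1: if some $u\in T$ lands in a clique $K$ of $G_i-S_i$, then \emph{every} vertex of $H$ mapped into $K$ lies in $T$. This absorption property is exactly what your $X$ lacks, and it is what lets the paper place the non-$S_i$ part of $T$ greedily into the \emph{smallest} available cliques (so boundary components never compete with the clique matching), then prune $G_i$ (Claim~3.2) until $\eta_i(T)$ is a twin cover of it, and finally solve a single constrained weighted bipartite matching between $\cc(G_1-\eta_1(T))$ and $\cc(G_2-\eta_2(T))$.

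In your setup, a boundary component $Q$ of $G_i-S_i$ may, and sometimes must, also host a clique $C$ of $H-X$ (precisely when $B_C$ contains the $X$-vertices sitting in $Q$). Your guess only fixes the attachment set and the partition of the blocks, not the specific component $Q$; the choice of $Q$ determines the residual capacity available to $C$ and simultaneously removes $Q$ from the clean pool. So block placement, boundary matching, and the clean--clean greedy pairing are genuinely coupled, and ``bounded extra guessing for the $O(k)$ boundary components combined with the greedy exchange'' is not a proof: you would need to guess the \emph{identity} (or at least the rank) of the partner clean component for each boundary component, which is not FPT-many choices. Replacing the greedy step by a single maximum-weight bipartite matching over \emph{all} components (with capacities reduced by the hosted $X$-vertices) would rescue the argument, but that is essentially the paper's endgame, reached only after the absorption property of $T$ has decoupled block placement from the matching.
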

\begin{proof}
Let $\langle G_{1}, G_{2}, h \rangle$ be an instance of {\probMCISshort}
and $S_{1}$ and $S_{2}$ be minimum twin covers of $G_{1}$ and $G_{2}$, respectively. 
Let $p = \max\{\tc(G_{1}), \tc(G_{2})\}$.
We first guess the sets of vertices $S'_{1} \subseteq S_{1}$ and $S'_{2} \subseteq S_{2}$ that are included in a maximum common induced subgraph of~$G_{1}$ and $G_{2}$.
There are $2^{|S_{1} \cup S_{2}|}$ ($\le 2^{2p}$) candidates for this guess.
We update $G_{i}$ and $S_{i}$ as $G_{i} \coloneqq G_{i} - (S_{i} \setminus S'_{i})$ and $S_{i} = S'_{i}$ for each~$i \in \{1,2\}$.
Now our goal is to find a maximum common induced subgraph $H$ of~$G_{1}$ and $G_{2}$
with an induced subgraph isomorphism $\eta_{i}$ from $H$ to $G_{i}$ such that $S_{i} \subseteq \eta_{i}(V(H))$ for each $i \in \{1,2\}$.
Since $\tc(H) \le p$, we can guess from~$2^{O(p^{2})}$ candidates
the subgraph $H'$ of $H$ induced by a minimum twin-cover of $H$. Let~$T = V(H')$, i.e.,~$H' = H[T]$.

\paragraph{Guessing $\eta_{i}$ on $T$.}
Fix $i \in \{1,2\}$. We show that we can guess $\eta_{i}$ restricted to~$T$ from a small number of candidates.
We first guess the partition of $T$ into $T_{\to S_{i}} \coloneqq T \cap \eta_{i}^{-1}(S_{i})$ and $T_{\not\to S_{i}} \coloneqq T \setminus \eta_{i}^{-1}(S_{i})$ from~$2^{|T|}$ ($\le 2^{p}$) candidates.
Next we guess $\eta_{i}$ restricted to $T_{\to S_{i}}$ from $|S_{i}|^{|T_{\to S_{i}}|}$ ($\le p^{p}$) candidates.
To guess $\eta_{i}$ on $T_{\not\to S_{i}}$, the following claim is crucial for bounding the number of candidates.
\begin{myclaim}
\label{clm:tc-only-for-T}
If $u \in T$ with $\eta_{i}(u) \notin S_{i}$ and $K$ is the connected component of $G_{i} - S_{i}$ containing $\eta_{i}(u)$,
then $K \cap \eta_{i}(V(H)) \subseteq \eta_{i}(T)$.
\end{myclaim}
\begin{subproof}
[\cref{clm:tc-only-for-T}]
Suppose to the contrary that there is a vertex $v \notin T$ such that $\eta_{i}(v) \in K$.
Observe that $\eta_{i}(u)$ and $\eta_{i}(v)$ are twins in $G_{i}$ as they both belong to~$K$.
This implies that $u$ and $v$ are twins in~$H$ since $\eta_{i}$ is an induced subgraph isomorphism from $H$ to $G_{i}$.
By the minimality of $T$ as a twin cover, there is a non-twin edge $e$ of $H$ that $T \setminus \{u\}$ does not hit.
Observe that $e$ has $u$ as an endpoint, and so, let $e = \{u,w\}$.
Since $u$ and $v$ are twins, $w$ is a neighbor of $v$ as well.
On the other hand, since~$u$ and $w$ are not twins, $v$ and $w$ are not twins.
This contradicts the assumption that $T$ is a twin cover of~$H$
as $T$ does not hit the non-twin edge $\{v,w\}$.
\end{subproof}

Intuitively, \cref{clm:tc-only-for-T} means that once we decide to map a vertex in $T$ to a connected component~$K$ of $G_{i} - S_{i}$, 
then all  vertices mapped to $V(K)$ have to belong to $T$.
This allows us to guess $\eta_{i}$ on $T_{\not\to S_{i}}$ from at most $(p+1)^{p \cdot 2^{p} + p}$ candidates as follows.
\begin{enumerate}
  \item For $X \subseteq S_{i}$, let $\mathcal{K}_{X} = \{K \in \cc(G_{i} - S_{i}) \mid N_{G_{i}}(V(K)) = X\}$.

  \item For each $X \subseteq S_{i}$, we guess a vector in $[|T_{\not\to S_{i}}|]_{0}^{|T_{\not\to S_{i}}|}$ 
  that represents how~$\mathcal{K}_{X}$ contains vertices of~$\eta_{i}(T_{\not\to S_{i}})$.
  For example, $(2,1,0,0,\dots)$ means that 
  ``the class $\mathcal{K}_{X}$ contains three vertices of $\eta_{i}(T_{\not\to S_{i}})$ in total, two vertices in one clique and the other one in another clique.''
  There are at most $(|T_{\not\to S_{i}}| + 1)^{|T_{\not\to S_{i}}|}$ options for each $X \subseteq S_{i}$, and thus,
  at most $(|T_{\not\to S_{i}}|+1)^{|T_{\not\to S_{i}}| \cdot 2^{|S_{i}|}}$ ($\le (p+1)^{p \cdot 2^{p}}$) options in total.
  
  \item By \cref{clm:tc-only-for-T}, we can greedily use the smallest cliques in $\mathcal{K}_{X}$ that together satisfy the guessed vector.
  Furthermore, since each component of $G_{i}-S_{i}$ consists of twin vertices, we can arbitrarily pick the guessed number of vertices from the selected components.
  This gives us a complete guess of~$\eta_{i}(T_{\not\to S_{i}})$.
  
  \item Now we can guess $\eta_{i}$ on $T_{\not\to S_{i}}$ from $|T_{\not\to S_{i}}|^{|T_{\not\to S_{i}}|}$ ($\le p^{p}$) candidates as we already know the set $\eta_{i}(T_{\not\to S_{i}})$.
\end{enumerate}

So far, we guessed $\eta_{i}$ on $T_{\to S_{i}}$ and on $T_{\not\to S_{i}}$.
By combining them, we get $\eta_{i}$ on $T$ as $T_{\to S_{i}} \cup T_{\not\to S_{i}} = T$.
At this point, we reject the current guess if $\eta_{i}$ is not an isomorphism from \(H'\) (= \(H[T]\))
to $G_{i}[\eta_{i}(T)]$ for each $i \in \{1,2\}$.
Also, we reject the current guess if $\eta_{i}(T)$ is not a twin cover of $G_{i}[S_{i} \cup \eta_{i}(T)]$, which is an induced subgraph of $G_{i}[\eta_{i}(V(H))]$.
Note that $\eta_{i}(T)$ is not necessarily a twin cover of $G_{i}$ itself.

\paragraph{Cleaning up $G_{i}$.}
Now we show that when $\eta_{i}(T)$ is not a twin cover of $G_{i}$,
we can find a vertex in \(G_i\) not belonging to $\eta_{i}(V(H))$.
\begin{myclaim}
\label{clm:tc-clean-up}
If $\eta_{i}(T)$ is not a twin cover of $G_{i}$, then $G_{i}$ contains a non-twin edge between
a vertex in~$S_{i} \setminus \eta_{i}(T)$
and a vertex in $V(G_{i}) \setminus (S_{i} \cup \eta_{i}(T))$.
\end{myclaim}

\begin{subproof}
[\cref{clm:tc-clean-up}]
Let $\{u,v\}$ be a non-twin edge in $G_{i}$ with ${\{u,v\} \cap \eta_{i}(T) = \emptyset}$.
Since $S_{i}$ is a twin cover of $G_{i}$, at least one of~$u$ and $v$ belongs to $S_{i} \setminus \eta_{i}(T)$.
If exactly one of them belongs to $S_{i} \setminus \eta_{i}(T)$, then the other belongs to $V(G_{i}) \setminus (S_{i} \cup \eta_{i}(T))$, and thus the claim holds.
In the following, we assume that both~$u$ and $v$ belong to $S_{i} \setminus \eta_{i}(T)$.

Since $\{u,v\} \in E(G_{i})$ and $\eta_{i}(T)$ is a twin cover of $G_{i}[S_{i} \cup \eta_{i}(T)]$,
the vertices~$u$ and $v$ belong to the same connected component of $G_{i}[S_{i} \cup \eta_{i}(T)] - \eta_{i}(T)$.
Since $u$ and $v$ are not twins in $G_{i}$, there exists a vertex $w$ in $G_{i}$ adjacent to exactly one of them, say $u$.
Since $u, v \in S_{i} \setminus \eta_{i}(T)$ and $G[S_{i} \setminus \eta_{i}(T)]$ ($=G_{i}[S_{i} \cup \eta_{i}(T)] - \eta_{i}(T)$) is a disjoint union of complete graphs, $w \notin S_{i} \setminus \eta_{i}(T)$.
Since $\eta_{i}(T)$ is a twin cover of $G_{i}[S_{i} \cup \eta_{i}(T)]$, $u$ and $v$ have the same neighbors in $\eta_{i}(T)$, and thus $w \notin \eta_{i}(T)$.
Hence, we can conclude that $w \notin S_{i} \cup \eta_{i}(T)$.
This implies the claim since $\{u,w\}$ is a non-twin edge (because of $v$) with $u \in S_{i} \setminus \eta_{i}(T)$ and $w \notin S_{i} \cup \eta_{i}(T)$.
\end{subproof}

By \cref{clm:tc-clean-up}, if $\eta_{i}(T)$ is not a twin cover of $G_{i}$, then  we can find (in polynomial time) a non-twin edge $\{x,y\}$ in $G_{i}$ such that
$x \in S_{i} \setminus \eta_{i}(T)$ and ${y \in V(G_{i}) \setminus (S_{i} \cup \eta_{i}(T))}$.
Since $\eta_{i}(T)$ is a twin cover of $G_{i}[\eta_{i}(V(H))]$ and $S_{i} \subseteq \eta_{i}(V(H))$, it holds that $y \notin \eta_{i}(V(H))$.
Hence, we can safely remove such~$y$ from~$G_{i}$.
By exhaustively applying this reduction, we eventually obtain an induced subgraph $G_{i}'$ of~$G_{i}$ such that $\eta_{i}(V(H)) \subseteq V(G_{i}')$ 
and $\eta_{i}(T)$ is a twin cover for~$G_{i}'$.
We then update $G_{i}$ as~$G_{i} \coloneqq G_{i}'$ for each $i \in \{1,2\}$.

\paragraph{Finding a maximum solution under the guesses.}
The remaining task is to extend the already guessed parts of $H$ and $\eta_{i}$ to an entire solution.
Since we already know the correspondence between the twin covers $\eta_{1}(T)$ of $G_{1}$ and $\eta_{2}(T)$ of $G_{2}$,
we only need to \emph{match} the connected components of $G_{1} - \eta_{1}(T)$ and $G_{2} - \eta_{2}(T)$ in an optimal way.
Since each connected component of $G_{i} - \eta_{i}(T)$ is a complete graph, it can contain the image of at most one connected component of $H - T$.
This allows us to reduce the task to the maximum weighted bipartite matching problem as follows.

We construct a bipartite graph $B = (V_{1}, V_{2}; F)$, where $V_{1} = \cc(G_{1} - \eta_{1}(T))$ and $V_{2} = \cc(G_{2} - \eta_{2}(T))$.
Two vertices $K_{1} \in V_{1}$ and $K_{2} \in V_{2}$ are adjacent in $B$ if they have the \emph{same} adjacency to the corresponding twin covers; 
i.e., $\{K_{1}, K_{2}\} \in F$ if $\eta^{-1}_{1}(N_{G_{1}}(K_{1})) = \eta^{-1}_{2}(N_{G_{2}}(K_{2}))$. (Note that $N_{G_{i}}(K_{i}) \subseteq \eta_{i}(T)$.)
For $\{K_{1}, K_{2}\} \in F$, we set $w(\{K_{1}, K_{2}\}) = \min\{|V(K_{1})|, |V(K_{2})|\}$.
We call a vertex of $B$ corresponding to a connected component of $G_{i}[S_{i} \setminus \eta_{i}(T)]$ \emph{special}.
(Note that a connected component of $G_{i}[S_{i} \setminus \eta_{i}(T)]$ is a connected component of $G_{i} - \eta_{i}(T)$ as well.)
We compute a matching~$M$ of $B$ with the maximum weight under that condition that $M$ contains all special vertices.
Such a matching can be computed in polynomial time by using an algorithm for finding a maximum-weight degree-constrained subgraph~\cite{Gabow83}.
As we describe in the next paragraph, we can construct $H$ with~$|T| + w(M)$ vertices, where $w(M)$ is the total weight of the edges in $M$.
Thus, we set $|T| + w(M)$ to the optimal value under the current guess.

Given $M$ in the previous paragraph, we extend the already guessed parts of $H$ and $\eta_{i}$ as follows.
Let $\{K_{1}, K_{2}\} \in M$ with $K_{1} \in V_{1}$ and $K_{2} \in V_{2}$.
We add a complete graph $K_{H}$ of $\min\{|V(K_{1})|, |V(K_{2})|\}$ vertices into $H$
and make it adjacent to the subset $\eta^{-1}_{1}(N_{G_{1}}(K_{1}))$ ($= \eta^{-1}_{2}(N_{G_{2}}(K_{2}))$) of~$T$.
For $i \in \{1,2\}$, we extend $\eta_{i}$ by mapping $V(K_{H})$ to arbitrarily $\min\{|V(K_{1})|, |V(K_{2})|\}$ vertices in $K_{i}$.
After the extension, we have $|V(H)| = |T| + w(M)$.
Note that this solution satisfies the condition that $S_{i} \setminus \eta_{i}(T) \subseteq \eta_{i}(V(H))$ as $M$ contains all special vertices.
\end{proof}


\subsection{{\probMCISshort} parameterized by cluster vertex deletion number}

Given the fixed-parameter tractability parameterized by $\tc(G_{1})+\tc(G_{2})$, it is natural to consider the generalization to cluster vertex deletion number.
Although we do not have a complete answer, we present steps toward it by giving an XP algorithm and an FPT approximation scheme parameterized by $\cvd(G_{1})+\cvd(G_{2})$.

Let $G$ be a graph. A set $S \subseteq V(G)$ is a \emph{cluster vertex deletion set} of $G$ if every connected component of $G-S$ is a complete graph.
The \emph{cluster vertex deletion number} of $G$, denoted $\cvd(G)$, is the minimum size of a cluster vertex deletion set of $G$.
From their definitions, we have $\cvd(G) \le \tc(G)$ for every graph $G$.
Since finding a minimum cluster vertex deletion set is fixed-parameter tractable parameterized by~$\cvd(G)$~\cite{HuffnerKMN10},
we assume that a minimum cluster vertex deletion set is given when designing a fixed-parameter algorithm parameterized by $\cvd(G)$.

\begin{theorem}
\label{thm:mcis-cvd-xp}
{\probMCIS} belongs to $\mathrm{XP}$ parameterized by $\cvd(G_{1})+\cvd(G_{2})$.
\end{theorem}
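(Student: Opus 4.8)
The plan is to reuse the architecture of the proof of \cref{thm:mcis-tc}, replacing every use of the \emph{twin} structure of the cliques of $G_i-S_i$ (most notably \cref{clm:tc-only-for-T}) by brute-force enumeration, which costs only $n^{O(p)}$ with $p=\cvd(G_1)+\cvd(G_2)$ and hence fits in $\mathrm{XP}$. Fix minimum cluster vertex deletion sets $S_i$ of $G_i$ and a hypothetical solution: a common induced subgraph $H$ with induced subgraph isomorphisms $\eta_i\colon H\to G_i$, and write $W_i=\eta_i(V(H))$. The structural starting point is that $T\coloneqq\eta_1^{-1}(S_1)\cup\eta_2^{-1}(S_2)$ is a cluster vertex deletion set of $H$ with $|T|\le|S_1|+|S_2|\le p$: for each $i$, $H-\eta_i^{-1}(S_i)\cong G_i[W_i\setminus S_i]$ is an induced subgraph of the disjoint union of cliques $G_i-S_i$, hence a disjoint union of cliques itself, so $H-T$ is too. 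Moreover $\eta_i^{-1}(S_i)\subseteq T$ gives $S_i\cap W_i\subseteq\eta_i(T)$, so every component of $H-T$ is mapped by $\eta_i$ into a single component of $G_i-S_i$ (a clique of $G_i$ meets only one component of $G_i-S_i$), and distinct components of $H-T$ go to distinct components of $G_i-S_i$.

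Accordingly, we enumerate the core of a solution. Guess $k=|T|\in[p]_0$, then guess the tuples $R_i\coloneqq\eta_i(T)$ for $i\in\{1,2\}$; this determines $\eta_i|_T$ and $H[T]$ (via $G_1[R_1]\cong G_2[R_2]$ under the guessed correspondence), and there are $n^{O(p)}$ such guesses. We reject a guess if it is inconsistent (e.g.\ $G_1[R_1]\not\cong G_2[R_2]$ under the correspondence) or if some vertex of $T$ avoids $\eta_1^{-1}(S_1)\cup\eta_2^{-1}(S_2)$. Once $R_1,R_2$ are fixed, it remains to choose --- injectively and consistently for $i\in\{1,2\}$ --- an assignment of the still-undetermined components of $H-T$ to components of $G_i-S_i$ maximizing the number of assigned vertices; the instance is a yes-instance iff some guess yields total value (size of $T$ plus assigned vertices) at least $h$.

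This assignment step reduces to a maximum-weight matching on the bipartite graph $B$ with parts $\cc(G_1-S_1)$ and $\cc(G_2-S_2)$, exactly as in \cref{thm:mcis-tc} but with weights computed more carefully. A component $C$ of $H-T$ hostable in $K_1\in\cc(G_1-S_1)$ via $\eta_1$ and in $K_2\in\cc(G_2-S_2)$ via $\eta_2$ must send each $c\in C$ to a pair $(v_1,v_2)\in(V(K_1)\setminus R_1)\times(V(K_2)\setminus R_2)$ with $\eta_1^{-1}(N_{G_1}(v_1)\cap R_1)=N_H(c)\cap T=\eta_2^{-1}(N_{G_2}(v_2)\cap R_2)$; all of $v_i$'s other adjacencies within $W_i$ are forced (within the clique $K_i$ trivially, while $S_i\setminus R_i$ is disjoint from $W_i$ and different components of $G_i-S_i$ are nonadjacent). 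Hence the largest such $C$ has size $\sum_{X\subseteq T}\min\{a_X^{(1)},a_X^{(2)}\}$ with $a_X^{(i)}=|\{v\in V(K_i)\setminus R_i : \eta_i^{-1}(N_{G_i}(v)\cap R_i)=X\}|$, which we use as $w(\{K_1,K_2\})$ (deleting the edge when it is $0$). A maximum-weight matching $M$ of $B$, computable in polynomial time, then yields a common induced subgraph on $|T|+w(M)$ vertices exactly as in \cref{thm:mcis-tc}, and conversely every solution consistent with the guessed core is captured this way; the overall running time is $n^{O(p)}$.

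The main obstacle is precisely the failure of \cref{clm:tc-only-for-T}: a component of $G_i-S_i$ need not be a twin class, so its vertices may have many distinct neighborhoods into $S_i$. This blocks any FPT bound on the number of ``component types'' and forces the $n^{O(p)}$ enumeration of $R_1,R_2$, and it turns the matching weights into the type-matching quantity above rather than the simple $\min\{|V(K_1)|,|V(K_2)|\}$ of \cref{thm:mcis-tc}. The delicate point is to verify that, once $R_1$ and $R_2$ are fixed, \emph{all} remaining constraints are captured by the local compatibility relation between single vertices $v_1,v_2$ (in particular that nothing couples the assignments of two different components of $H-T$ beyond the matching constraint), so that the global optimum genuinely decomposes as a weighted matching over $\cc(G_1-S_1)$ and $\cc(G_2-S_2)$.
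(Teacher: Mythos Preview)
Your proposal is correct and follows essentially the same strategy as the paper: guess (in $n^{O(p)}$ ways) the images $R_i=\eta_i(T)$ of the core $T=\eta_1^{-1}(S_1)\cup\eta_2^{-1}(S_2)$ together with the correspondence between them, and then reduce the remaining problem to a maximum-weight bipartite matching between the cliques of $G_1-S_1$ and $G_2-S_2$ with edge weight $\sum_{X\subseteq T}\min\{a_X^{(1)},a_X^{(2)}\}$. The paper's proof differs only cosmetically---it first removes the unused vertices of $S_i$ and then matches on $\cc(G_i-(S_i\cup R_i))$ rather than on $\cc(G_i-S_i)$ with $R_i$ excluded from the counts---but the two formulations are equivalent.
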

\begin{proof}
Let $\langle G_{1}, G_{2}, h \rangle$ be an instance of {\probMCISshort}
and $S_{1}$ and $S_{2}$ be minimum cluster vertex deletion sets of $G_{1}$ and $G_{2}$, respectively.
Let $p = \max\{\cvd(G_{1}), \cvd(G_{2})\}$.
We first guess the vertices in $S_{1}$ and $S_{2}$ that are not included in a maximum common induced subgraph of $G_{1}$ and $G_{2}$.
We remove these vertices from $G_{1}$ and~$G_{2}$ and also from $S_{1}$ and $S_{2}$.
For simplicity, we still call the resulting graphs~$G_{1}$ and~$G_{2}$ and the resulting sets $S_{1}$ and $S_{2}$.
Next we guess, for each $i \in \{1,2\}$, the set $R_{3-i} \subseteq V(G_{3-i})$ that is matched with $S_{i}$.
That is, we are going to find a maximum common induced subgraph $H$ of $G_{1}$ and $G_{2}$ with an induced subgraph isomorphism $\eta_{i}$ from $H$ to $G_{i}$ 
such that $\eta_{i}^{-1}(S_{i}) = \eta_{3-i}^{-1}(R_{3-i})$ for each~$i \in \{1,2\}$.
Note that $\eta_{1}^{-1}(S_{1} \cup R_{1}) = \eta_{2}^{-1}(S_{2} \cup R_{2})$ and $|S_{i} \cup R_{i}| \le 2p$.
We guess the mapping~$\phi \coloneqq \eta_{2} \circ \eta_{1}^{-1}$ from $S_{1} \cup R_{1}$ to $S_{2} \cup R_{2}$. 
We reject the current guess if~$\phi$ is not an isomorphism from $G_{1}[S_{1} \cup R_{1}]$ to~$G_{2}[S_{2} \cup R_{2}]$.
The guesses made so far have at most $2^{2p} \cdot n^{2p} \cdot (2p)!$ candidates.

Since $S_{1} \cup R_{1}$ and $S_{2} \cup R_{2}$ are cluster vertex deletion sets of $G_{1}$ and $G_{2}$ and they are matched in the solution under consideration,
we can apply a similar matching phase as in the proof of \cref{thm:mcis-tc}.
We construct a bipartite graph with $V_{1} \coloneqq \cc(G_{1} - (S_{1} \cup R_{1}))$ as one color class and $V_{2} \coloneqq \cc(G_{2} - (S_{2} \cup R_{2}))$ as the other color class.
We add all possible edges between $V_{1}$ and $V_{2}$.
For $K_{1} \in V_{1}$ and  $K_{2} \in V_{2}$, we set the weight of the edge~$\{K_{1}, K_{2}\}$ to the maximum number of vertices that can be gained by matching $K_{1}$ and $K_{2}$.
More precisely, we set its weight to
\begin{align*}
  \sum_{X \subseteq S_{1} \cup R_{1}}
  \min\{
    &  |\{v_{1} \in K_{1} \mid N_{G_{1}}(v_{1}) \cap (S_{1} \cup R_{1}) = X\}|, \ 
       |\{v_{2} \in K_{2} \mid N_{G_{2}}(v_{2}) \cap (S_{2} \cup R_{2}) = \phi(X)\}|
    \}.
\end{align*}
We find a maximum-weight matching $M$ in this bipartite graph and set $|S_{1} \cup R_{1}| + w(M)$ to the optimal value under the current guess.
Correctness can be seen by observing that for $v_{1} \in K_{1}$ and $v_{2} \in K_{2}$, we can set $\eta_{1}^{-1}(v_{1}) = \eta_{2}^{-1}(v_{2})$ if and only if
$\phi(N_{G_{1}}(v_{1}) \cap (S_{1} \cup R_{1})) = N_{G_{2}}(v_{2}) \cap (S_{2} \cup R_{2})$.
\end{proof}

As an additional remark, we show that {\probMCISshort} parameterized by $\cvd(G_{1})+\cvd(G_{2})$ admits an FPT approximation scheme.
For graphs $G_{1}$ and $G_{2}$, let $\mcis(G_{1},G_{2})$ be the number of vertices in a maximum common induced subgraph of $G_{1}$ and $G_{2}$.
For $r \le 1$, an \emph{$r$-approximation solution} for {\probMCISshort} on $G_{1}$ and~$G_{2}$
is a common induced subgraph of $G_{1}$ and $G_{2}$ with at least $r \cdot \mcis(G_{1},G_{2})$ vertices.
The \emph{vertex integrity} of $G$, denoted $\vi(G)$, is defined as $\vi(G) = \min_{S \subseteq V(G)} (|S| + \max_{C \in \cc(G - S)} |C|)$.
\begin{theorem}
\label{thm:mcis-cvd-fpt-as}
Given graphs $G_{1}$ and $G_{2}$ and a number $\epsilon \in (0,1)$,
finding a $(1-\epsilon)$-approximation solution for {\probMCISshort} is fixed-parameter tractable parameterized by $(\cvd(G_{1})+\cvd(G_{2}))/\epsilon$.
\end{theorem}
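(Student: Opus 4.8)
The plan is to turn the XP algorithm of \cref{thm:mcis-cvd-xp} into a fixed-parameter algorithm by sacrificing an $\epsilon$-fraction of the solution, and to patch up the regime where the sacrifice is too large with a brute-force routine over bounded-size solutions. Fix minimum cluster vertex deletion sets $S_{1},S_{2}$ of $G_{1},G_{2}$, let $p=\max\{\cvd(G_{1}),\cvd(G_{2})\}$, and set $N_{0}=\lceil 2p/\epsilon\rceil$; note that both $p$ and $N_{0}$ are bounded by a function of the parameter $(\cvd(G_{1})+\cvd(G_{2}))/\epsilon$. The only non-FPT step in \cref{thm:mcis-cvd-xp} is the guess of the set $R_{3-i}\subseteq V(G_{3-i})$ matched with $S_{i}$, which ranges over $n^{\Theta(p)}$ candidates. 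The key observation is that one may assume $R_{3-i}\subseteq S_{3-i}$ at the cost of at most $2p$ vertices: given an optimal solution $H$ with induced subgraph isomorphisms $\eta_{1},\eta_{2}$, the sets $U_{i}=\eta_{i}^{-1}(S_{i})$ are both cluster vertex deletion sets of $H$ of size at most $p$, and deleting the at most $2p$ vertices of $(U_{1}\setminus U_{2})\cup(U_{2}\setminus U_{1})$ from $H$ yields a common induced subgraph $H''$ (with restricted isomorphisms $\eta_{1}'',\eta_{2}''$) such that $(\eta_{1}'')^{-1}(S_{1})=(\eta_{2}'')^{-1}(S_{2})$; this is exactly the configuration handled in \cref{thm:mcis-cvd-xp} with $R_{i}\subseteq S_{i}$. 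Hence it suffices to guess, from $2^{|S_{1}|}\cdot|S_{2}|^{|S_{1}|}\le 2^{p}p^{p}$ candidates, a subset $U'\subseteq S_{1}$ (playing the role of $\eta_{1}''(U_{1}\cap U_{2})$) and an injection $\phi\colon U'\to S_{2}$ that is an isomorphism of $G_{1}[U']$ onto $G_{2}[\phi(U')]$.

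For each such guess I would delete $S_{1}\setminus U'$ from $G_{1}$ and $S_{2}\setminus\phi(U')$ from $G_{2}$ (these vertices are unused by $H''$) and run the weighted bipartite matching of \cref{thm:mcis-cvd-xp} on $\cc(G_{1}-S_{1})$ and $\cc(G_{2}-S_{2})$ with the same weight function, now using $U'$, $\phi(U')$, and $\phi$ in place of $S_{1}\cup R_{1}$, $S_{2}\cup R_{2}$, and the isomorphism there; this returns in polynomial time the exact optimum $|U'|+w(M)$ among common induced subgraphs in which all of $U'$ is used and matched to $\phi(U')$ via $\phi$, and it also constructs a witnessing subgraph. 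Let $a$ be the maximum of $|U'|+w(M)$ over all guesses: each guess gives a genuine common induced subgraph, so $a\le\mcis(G_{1},G_{2})$, and the guess consistent with $H''$ gives $a\ge|V(H'')|\ge\mcis(G_{1},G_{2})-2p$. In parallel, for $k=N_{0},N_{0}-1,\dots,1$ I would enumerate all (finitely many) $k$-vertex graphs and test for each whether it is an induced subgraph of both $G_{1}$ and $G_{2}$; deciding whether a fixed $k$-vertex graph $H$ is an induced subgraph of a graph with cluster vertex deletion set $S_{i}$ of size at most $p$ is fixed-parameter tractable in $k+p$ — guess the preimage $Z\subseteq V(H)$ of $S_{i}$ and an embedding of $Z$ into $S_{i}$ (there are at most $2^{k}p^{k}$ choices), check that $H-Z$ is then a disjoint union of cliques, and match those cliques to the cliques of $G_{i}-S_{i}$ respecting adjacencies to $S_{i}$, which is again a polynomial-time bipartite matching. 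This yields $b=\min\{\mcis(G_{1},G_{2}),N_{0}\}$ together with a witness, and the algorithm outputs $\max\{a,b\}$ with its subgraph.

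For the guarantee: if $\mcis(G_{1},G_{2})<N_{0}$ then $b=\mcis(G_{1},G_{2})$ and the output is optimal; if $\mcis(G_{1},G_{2})\ge N_{0}\ge 2p/\epsilon$ then $2p\le\epsilon\cdot\mcis(G_{1},G_{2})$, so $a\ge\mcis(G_{1},G_{2})-2p\ge(1-\epsilon)\mcis(G_{1},G_{2})$; in either case $\max\{a,b\}$ is the size of an explicitly produced common induced subgraph with at least $(1-\epsilon)\mcis(G_{1},G_{2})$ vertices. The total running time is $2^{O(p\log p)}\cdot\mathrm{poly}(n)$ for the first routine and $2^{O((p/\epsilon)^{2})}\cdot\mathrm{poly}(n)$ for the second, both fixed-parameter tractable in $(\cvd(G_{1})+\cvd(G_{2}))/\epsilon$. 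The step I expect to need the most care is the trimming argument of the first paragraph: checking that deleting $(U_{1}\setminus U_{2})\cup(U_{2}\setminus U_{1})$ really leaves a solution of the special form whose optimum the matching of \cref{thm:mcis-cvd-xp} computes (in particular that distinct cliques of $H''$ minus its deletion set land in distinct clusters on both sides), since this is what collapses the $n^{\Theta(p)}$ guess into an FPT one, at a cost of only the $2p$ vertices that the small-solution routine then compensates for.
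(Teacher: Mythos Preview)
Your proof is correct, but the paper takes a rather different and shorter route. Both arguments share the same two-regime structure: one routine that loses at most $2p$ vertices (good when $\mcis(G_{1},G_{2})\ge 2p/\epsilon$) and one routine that solves the problem exactly when $\mcis(G_{1},G_{2})<2p/\epsilon$. The differences are in how each regime is handled.

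For the large-$\mcis$ regime, the paper simply computes $\mcis(G_{1}-S_{1},G_{2}-S_{2})$, which already satisfies $\mcis(G_{1}-S_{1},G_{2}-S_{2})\ge\mcis(G_{1},G_{2})-2p$; this is exactly your guess $U'=\emptyset$, so your elaborate enumeration over $(U',\phi)$ is correct but unnecessary for the stated bound. For the small-$\mcis$ regime, the paper observes that if $\mcis(G_{1},G_{2})<2p/\epsilon$ then an optimal solution uses fewer than $2p/\epsilon$ vertices from each twin class, so truncating every twin class of $G_{i}$ to size $2p/\epsilon$ yields an induced subgraph $G_{i}^{*}$ with $\vi(G_{i}^{*})\le p+2^{p}(2p/\epsilon)$ and $\mcis(G_{1}^{*},G_{2}^{*})=\mcis(G_{1},G_{2})$; the exact answer then comes from the known FPT algorithm for {\probMCISshort} parameterized by vertex integrity. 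Your alternative---enumerating all graphs on at most $N_{0}$ vertices and testing {\probISIshort} into each $G_{i}$ via a guess of the preimage of $S_{i}$ followed by bipartite matching of cliques---is also correct and has the virtue of being self-contained (no black-box call to the vertex-integrity algorithm), at the cost of a worse dependence on the parameter, roughly $2^{O((p/\epsilon)^{2})}$ versus the paper's implicit dependence through~$\vi$. The trimming step you flagged as delicate does go through: distinct cliques of $H''-(U_{1}\cap U_{2})$ land in distinct clusters of $G_{i}-S_{i}$ because any two vertices in different cliques of $H''-(U_{1}\cap U_{2})$ are non-adjacent in $H$, hence their images are non-adjacent in $G_{i}$.
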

\begin{proof}
Let $\langle G_{1}, G_{2}, k \rangle$ be an instance of {\probMCISshort}
and $S_{1}$ and $S_{2}$ be minimum cluster vertex deletion sets of $G_{1}$ and $G_{2}$, respectively.
Let $p = \max\{\cvd(G_{1}), \cvd(G_{2})\}$.
Let $G_{i}^{*}$ be an induced subgraph of $G_{i}$ obtained by removing vertices until every twin class has size at most $2p/\epsilon$.
We can see that~$\vi(G_{i}^{*}) \le p + 2^{p} \cdot (2p/\epsilon)$ as each connected component of $G_{i}-S_{i}$ can be partitioned into at most $2^{|S_{i}|}$ ($\le 2^{p}$) twin classes.
Thus, $\mcis(G_{1}^{*}, G_{2}^{*})$ can be computed using the fixed-parameter algorithm for {\probMCISshort} parameterized by vertex integrity~\cite{GimaHKKO22}.
We output the solution corresponding to the larger one of $\mcis(G_{1}^{*}, G_{2}^{*})$ and $\mcis(G_{1}-S_{1}, G_{2}-S_{2})$.
Note that  $\mcis(G_{1}-S_{1}, G_{2}-S_{2})$ can be computed in polynomial time by the algorithm in \cref{thm:mcis-tc} or \cref{thm:mcis-cvd-xp} as $\tc(G_{i} - S_{i}) = \cvd(G_{i} - S_{i}) = 0$.

We show that the output is a $(1-\epsilon)$-approximation solution.
Assume that $\mcis(G_{1}-S_{1}, G_{2}-S_{2}) < (1-\epsilon) \cdot \mcis(G_{1}, G_{2})$ (otherwise, we are done).
Since $\mcis(G_{1}, G_{2}) \le \mcis(G_{1}-S_{1}, G_{2}-S_{2}) + |S_{1}|+|S_{2}| = \mcis(G_{1}-S_{1}, G_{2}-S_{2}) + 2p$,
we have $\mcis(G_{1}, G_{2}) < (1-\epsilon) \cdot \mcis(G_{1}, G_{2})+2p$,
and thus $\mcis(G_{1}, G_{2}) < 2p/\epsilon$.
This implies that $\mcis(G_{1}, G_{2}) = \mcis(G_{1}^{*}, G_{2}^{*})$
since a maximum common induced subgraph of $G_{1}$ and $G_{2}$ uses at most $\mcis(G_{1}, G_{2})$ ($< 2p/\epsilon$) vertices in each twin class of $G_{i}$.
\end{proof}


\section{{\probMCSshort} and {\probMCISshort} parameterized by max-leaf number}

The \emph{max-leaf number} of a connected graph $G$, denoted $\ml(G)$, is the maximum number of leaves in a spanning tree of $G$.
For a disconnected graph $G$, we define its max-leaf number as the sum of the max-leaf number of its connected components;
that is, $\ml(G) = \sum_{C \in \cc(G)} \ml(C)$.
From the definition of $\ml(G)$, we can see that $\ml(G) \ge |\cc(G)|$
and $\ml(G) \ge \Delta(G)$. (To see the latter, consider a BFS tree rooted at a vertex of the maximum degree.)

In this section, we show that both {\probMCSshort} and {\probMCISshort} are fixed-parameter tractable parameterized by the max-leaf number of both graphs.
\begin{theorem}
	\label{thm:mcs-mln}
	{\probMCS} is fixed-parameter tractable parameterized by $\ml(G_{1}) + \ml(G_{2})$.
\end{theorem}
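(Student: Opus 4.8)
\emph{Overview.}
The plan is to exploit the fact that a graph of bounded max-leaf number is, up to subdividing edges and attaching pendant paths, a bounded-size multigraph. Concretely, recall the well-known structural fact that a graph $G$ with $\ml(G)=\ell'$ has only $\mathrm{poly}(\ell')$ vertices of degree different from $2$; hence, adding one vertex of each cycle component, one can compute in polynomial time a set $B(G)$ with $|B(G)|=\mathrm{poly}(\ell')$ such that $G-B(G)$ is a disjoint union of $\mathrm{poly}(\ell')$ paths, which I call \emph{threads}, each attached to $B(G)$ by at most two edges. Write $\ell=\ml(G_{1})+\ml(G_{2})$, let $B_{i}=B(G_{i})$, and let $\mathcal{T}_{i}=\cc(G_{i}-B_{i})$. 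I will guess, in $f(\ell)$ ways, the ``combinatorial shape'' of a maximum common subgraph $H$ of $G_{1}$ and $G_{2}$ relative to these decompositions, and for each guess solve in polynomial time an optimization problem that fixes the remaining numeric data (the lengths of the pieces of $H$ lying inside threads).

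\emph{Structure of a solution.}
Let $H$ be a maximum common subgraph with subgraph isomorphisms $\eta_{i}\colon H\to G_{i}$. Since $\eta_{i}$ maps a vertex of degree $\ge 3$ to a vertex of degree $\ge 3$, and $\Delta(G_{i})\le\ml(G_{i})\le\ell$ while all vertices of $G_{i}$ of degree $\ge 3$ lie in $B_{i}$, every vertex of $H$ of degree $\ge 3$ is mapped into $B_{i}$ by $\eta_{i}$. Hence $W\coloneqq\eta_{1}^{-1}(B_{1})\cup\eta_{2}^{-1}(B_{2})$ satisfies $|W|=O(\ell)$ and $H-W$ has maximum degree at most $2$; moreover $H-W$ has no cycle component, as such a component would have a connected $\eta_{1}$-image avoiding $B_{1}$, hence lying inside a single thread of $G_{1}$, which is impossible for a cycle. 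Thus $H-W$ is a disjoint union of paths, which I call \emph{arcs}; each arc touches $W$ only at its two endpoints and by at most one edge at each, and its $\eta_{i}$-image is a subpath of a single thread of $G_{i}$ for each $i$. A short exchange argument bounds the number of arcs: the images of $W$ subdivide each thread of $G_{i}$ into \emph{segments}; if two arcs lie, on the $G_{1}$-side, in the same segment of the same thread, and likewise on the $G_{2}$-side, and neither is joined to $W$ by an edge of $H$, then deleting both and inserting a single longer path in their place strictly increases $|E(H)|$, a contradiction. Since there are $\mathrm{poly}(\ell)$ segments on each side and each vertex of $W$ (of $H$-degree at most $\ell$) is joined to $O(\ell)$ arcs, $H-W$ has at most $N(\ell)=\mathrm{poly}(\ell)$ arcs.

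\emph{The algorithm.}
For each \emph{blueprint} I record: the graph $H[W]$ on $|W|=O(\ell)$ labelled vertices; for each $i$, which vertices of $W$ are mapped into $B_{i}$ and the induced injection into $B_{i}$, which thread of $G_{i}$ each remaining vertex of $W$ goes to, and the left-to-right order along each thread of the $W$-vertices placed in it; and, for each of the at most $N(\ell)$ arcs, which thread and which segment it occupies on each side and, for each endpoint, whether and to which vertex of $W$ it is joined in $H$. There are $f(\ell)$-many such blueprints for a computable $f$, and I discard any failing the obvious local checks (e.g.\ $\eta_{i}$ restricted to $\eta_{i}^{-1}(B_{i})$ must be an isomorphism onto an induced subgraph of $G_{i}[B_{i}]$, and vertices of $W$ joined in $H[W]$ must map to adjacent vertices of $G_{i}$). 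For a surviving blueprint, the remaining unknowns are the lengths of the arcs, equivalently how each thread's length is split among its segments; since each arc contributes its length to $|E(H)|$ and lies in exactly one thread of $G_{1}$ and one thread of $G_{2}$, maximizing $|E(H)|$ under a fixed blueprint is a transportation problem, equivalently a maximum-flow problem on the network with a source adjacent to one node per thread of $G_{1}$ through edges of capacity (that thread's length minus a blueprint-determined constant), one node per thread of $G_{2}$ adjacent to a sink through edges of analogous capacity, and one edge joining the $G_{1}$-node and the $G_{2}$-node of each arc. A maximum integral flow, computable in polynomial time and integral because all capacities are integers, gives the best completion of the blueprint, and the final answer is the maximum over all blueprints, compared with $h$. (Isolated vertices and cycle components of the input graphs are handled separately, which is routine.)

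\emph{Main obstacle.}
The two delicate points are the exchange argument of the second paragraph, which is what allows the \emph{entire} combinatorial skeleton, not merely $H[W]$, to be guessed within an FPT bound, and, for a fixed blueprint, pinning down the residual optimization precisely: one must compute the blueprint-dependent constants (markers, the edges of $H[W]$ realized inside threads, and arcs abutting their bounding markers) and verify that the flow network models exactly the valid completions. The structural decomposition of the first paragraph and the degree-counting of the second are routine.
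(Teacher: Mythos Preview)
Your approach is correct and shares the paper's overall plan---decompose each $G_i$ into $O(\ell)$ branch vertices and $\mathrm{poly}(\ell)$ long paths, use an exchange argument to show that an optimal $H$ has only $\mathrm{poly}(\ell)$ pieces relative to this decomposition, guess that combinatorial skeleton, and optimize the remaining numeric lengths---but it differs from the paper in two meaningful ways. First, the paper takes the skeleton to be the \emph{smoothing} of $H$ (so a single skeleton edge may traverse several branch vertices of $G_i$, and the paper guesses for it a full alternating sequence of branch vertices and maximal degree-$2$ trails), whereas your choice $W=\eta_1^{-1}(B_1)\cup\eta_2^{-1}(B_2)$ forces every arc to sit inside a single thread on each side; your exchange argument for bounding the number of arcs is essentially the same pigeonhole-and-merge step the paper uses to bound $\ml(H)$. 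Second, and as a direct consequence of the finer decomposition, the paper encodes the residual length problem as an ILP with $f(\ell)$ variables and invokes Lenstra's algorithm, while in your setting each arc participates in exactly one capacity inequality per side, so the constraint matrix is the incidence matrix of a bipartite graph and you can replace Lenstra by an integral max-flow computation. This is more elementary and gives a cleaner dependence on $n$; on the other hand it rests on a step you rightly flag as the main obstacle but do not actually carry out, namely that for a fixed blueprint the per-segment packing constraints collapse to a single capacity per thread (because the $W$-vertices placed inside a thread are free to slide, and attachments only pin arcs to segment boundaries without adding length constraints). The paper handles the analogous bookkeeping with explicit position variables and a per-trail length inequality inside the ILP, which is heavier but requires no such structural observation.
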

\begin{theorem}
	\label{thm:mcis-mln}
	{\probMCIS} is fixed-parameter tractable parameterized by $\ml(G_{1}) + \ml(G_{2})$.
\end{theorem}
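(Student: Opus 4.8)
The plan is to exploit that a graph of bounded max-leaf number is, after deleting a bounded ``core'', just a collection of long paths together with a few cycles, and that this shape is preserved under taking induced subgraphs; a solution can then be encoded by a bounded amount of combinatorial data plus a list of unknown \emph{lengths}, which I would pin down by integer linear programming. Concretely, I would first invoke the standard structural fact about max-leaf number: if $G$ is connected then the number of vertices of degree different from $2$ is $O(\ml(G))$, and, since $\Delta(G) \le \ml(G)$, such a $G$ arises by subdividing edges of a multigraph $G^{\circ}$ with $O(\ml(G))$ vertices and $O(\ml(G)^{2})$ edges. Writing $k = \ml(G_{1}) + \ml(G_{2})$ and applying this to each component, each $G_{i}$ splits into a set $B_{i}$ of \emph{branch vertices} (those of degree $\neq 2$; $|B_{i}| = O(k)$), a family of $O(k^{2})$ \emph{arcs} (the subdivided edges of $G_{i}^{\circ}$, each of some length), and $O(k)$ \emph{cycle components}; isolated vertices and edges count as trivial arcs.

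\textbf{Shape of a solution.}
Given a solution $(H, \eta_{1}, \eta_{2})$, I would set $C = \eta_{1}^{-1}(B_{1}) \cup \eta_{2}^{-1}(B_{2})$, so that $|C| \le |B_{1}| + |B_{2}| = O(k)$. Every vertex outside $C$ is mapped by both $\eta_{i}$ to a vertex of degree at most $2$, so $H - C$ is a disjoint union of paths and cycles; each component of $H - C$ is adjacent in $H$ to at most two vertices of $C$ (only its endpoints can carry an edge into $C$), and, since $\eta_{i}$ maps it into $G_{i} - B_{i}$, it lies inside a single arc or a single cycle component of $G_{i}$ for each $i$. The number of these components is then bounded: those attached to at least one core vertex number at most $\sum_{c \in C} \deg_{H}(c) \le |C| \cdot k = O(k^{2})$; the cycle components of $H - C$ inject into the cycle components of $G_{1}$, hence number $O(k)$; and whenever two path components attached to no core vertex run through the same interval of an arc of $G_{1}$ and the same interval of an arc of $G_{2}$, concatenating them into one longer path leaves $|V(H)|$ unchanged and keeps both (induced) embeddings valid, so after exhaustive merging only $O(\mathrm{poly}(k))$ of them remain. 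Hence there is an optimal solution whose \emph{skeleton} --- the graph $H[C]$ together with, for every component of $H - C$, its attachment to $C$, its length, and the arc-or-cycle of $G_{1}$ and of $G_{2}$ hosting it --- consists of $O(\mathrm{poly}(k))$ objects, leaving only the lengths and the positions of core vertices within their arcs undetermined.

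\textbf{Algorithm and main obstacle.}
The algorithm then enumerates all skeletons: $H[C]$ gives $2^{O(k^{2})}$ choices, the combinatorial placement of $C$ into the branch vertices and the ordered arc-interiors of $G_{1}$ and $G_{2}$ gives $k^{O(k)}$ choices, and assigning to each of the $O(\mathrm{poly}(k))$ non-core components an attachment pattern in $C$, a host arc/cycle and interval in each of $G_{1}, G_{2}$, and a relative order inside those intervals multiplies by a further $k^{O(\mathrm{poly}(k))}$ factor; in total $2^{\mathrm{poly}(k)\log k}$ skeletons. For a fixed skeleton it remains to choose an integer position for each core vertex placed inside an arc and an integer length $\ge 1$ for each non-core path component so as to maximise $|C| + \sum(\text{lengths})$ plus the fixed contribution of the cycle components, subject to: the core vertices keep the guessed order inside their arcs, and in every arc the hosted components fit into the intervals delimited by the core vertices, respecting the prescribed adjacency to or separation from those core vertices and the arc ends. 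All of these constraints are linear in the $O(\mathrm{poly}(k))$ variables, so this is integer linear programming in $\mathrm{poly}(k)$ dimensions, solvable in $f(k)\cdot n^{O(1)}$ time; one takes the maximum over all skeletons and compares it with $h$. The main obstacle is exactly this second step: arranging the combinatorial guess so that ``does everything still fit into the arcs?'' becomes a genuine bounded-dimension integer program --- which is why the core-vertex positions must themselves be variables --- and carrying out the merging argument carefully enough that the number of non-core components, and hence the number of variables, is bounded by a fixed polynomial in $k$ rather than by the input size. Everything else is bookkeeping, and the proof of \cref{thm:mcs-mln} runs in parallel, counting edges rather than vertices and dropping the non-edge conditions from the embeddings.
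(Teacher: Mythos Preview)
Your proposal is correct and follows the same three-step strategy as the paper --- bound the combinatorial complexity of an optimal solution by a merging argument, enumerate all possible ``skeletons'', and fix the remaining lengths by an ILP in bounded dimension --- but the two proofs organise the first two steps differently. The paper first proves, as a standalone lemma, that any maximum common (induced) subgraph $H$ satisfies $\ml(H)\le 24\ell^{5}$ (using exactly the pigeonhole-and-merge idea you sketch for the unattached paths), and then takes the \emph{smoothing} of $H$ as its skeleton; a single edge of that smoothing may thread through several maximal degree-$2$ trails of $G_i$, which is handled by guessing a ``valid alternating sequence'' of branch vertices and trails for each such edge. You instead take $C=\eta_1^{-1}(B_1)\cup\eta_2^{-1}(B_2)$ as the core, which is strictly finer: every degree-$2$ vertex of $H$ that lands on a branch vertex of either $G_i$ is already in $C$, so each component of $H-C$ automatically lives in a single arc of both host graphs, and the alternating-sequence machinery is replaced by integer position variables for the core vertices inside their arcs. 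Your decomposition thus trades a somewhat larger skeleton for a simpler per-piece embedding; the paper's trades a smaller skeleton for the sequence-guessing step. The merging argument is the same key idea in both, and either packaging yields the claimed fixed-parameter tractability.
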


In our algorithms, we do not need to compute a spanning tree with the maximum number of leaves
(although it is actually fixed-parameter tractable parameterized by the number of leaves~\cite{Zehavi18}).
Instead, we use a polynomial-time computable structure described below.

By~\(V_{\neq 2}(G)\), we denote the non-degree-2 vertices of \(G\).
It is known that, using the result of Kleitman and West~\cite{KleitmanW91},
the number of non-degree-2 vertices can be bounded from above by a linear function of max-leaf number (see e.g.,~\cite{FellowsLMMRS09}).
To be self-contained, we include a proof of the following statement in the appendix.
(Indeed, we prove a slightly stronger bound of $4\ml(G) - 6$, which is tight. See \cref{lem:deg-2-bound-mln}.)

\begin{lemma}[Folklore]
\label{lem:deg-2-bound-mln-weak}
For every graph $G$, $|V_{\ne 2}(G)| \le 4 \ml(G)$.
\end{lemma}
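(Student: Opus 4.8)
The statement is that $|V_{\ne 2}(G)| \le 4\,\ml(G)$ for every graph $G$. I would first reduce to the connected case: since both sides are additive over connected components (by the definition $\ml(G) = \sum_{C \in \cc(G)} \ml(C)$ and the obvious fact that $V_{\ne 2}$ splits over components), it suffices to prove $|V_{\ne 2}(C)| \le 4\,\ml(C)$ for each connected $C$. For a connected graph I would then invoke the theorem of Kleitman and West~\cite{KleitmanW91}, which controls the number of leaves in a spanning tree of a graph with many high-degree vertices; the cleanest route is via the standard corollary that any connected graph in which every vertex has degree $1$ or at least $3$ has a spanning tree with at least $n/4 + 2$ leaves, equivalently $\ml(G) \ge n/4$ for such graphs.

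To apply this I would contract the degree-$2$ part of $G$. Concretely, take a connected graph $G$ and suppress (dissolve) all degree-$2$ vertices: repeatedly replace a degree-$2$ vertex $v$ with neighbors $x,y$ by the edge $xy$ (if $x \ne y$), or delete it and close up if it lies on a cycle, obtaining a multigraph $G'$ on exactly the vertex set $V_{\ne 2}(G)$ in which every vertex has degree $1$ or $\ge 3$. The key monotonicity step is that suppressing a degree-$2$ vertex does not decrease the max-leaf number: a spanning tree of $G'$ with $\ell$ leaves lifts to a spanning tree of $G$ with at least $\ell$ leaves, since re-inserting the degree-$2$ vertices along edges only subdivides edges and can only add (never remove) leaves when a path is split off. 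Hence $\ml(G) \ge \ml(G') \ge |V(G')|/4 = |V_{\ne 2}(G)|/4$, which is exactly the claimed bound. One must be a little careful with multigraph artifacts — parallel edges and loops created by the suppression, and isolated short cycles — but these are degenerate cases (a cycle has $V_{\ne 2} = \emptyset$, and $K_1$, $K_2$ are trivial) that can be handled separately or waved away, since they only make the inequality easier.

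**The main obstacle.** The one genuinely non-trivial ingredient is the Kleitman–West-type bound $\ml(H) \ge |V(H)|/4$ for connected $H$ with minimum-degree-among-non-leaves at least $3$; everything else (additivity, the suppression lift, the degenerate cases) is bookkeeping. Since the paper explicitly says it will give a self-contained proof of the stronger bound $4\,\ml(G) - 6$ in the appendix (\cref{lem:deg-2-bound-mln}), the honest plan is to cite that appendix lemma — the weak folklore form here is just a cleaner statement to use in the main text — so in the body I would simply note that \cref{lem:deg-2-bound-mln-weak} is immediate from \cref{lem:deg-2-bound-mln} (or from \cite{KleitmanW91,FellowsLMMRS09}), and defer the real work to the appendix. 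If I had to prove the Kleitman–West bound itself, the hard part would be the extremal/discharging argument on the suppressed graph showing that a greedy leaf-maximizing spanning tree construction yields at least a quarter of the vertices as leaves; that is precisely what \cref{lem:deg-2-bound-mln} in the appendix is set up to do.
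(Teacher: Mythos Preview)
Your final recommendation --- simply cite \cref{lem:deg-2-bound-mln} in the appendix --- is exactly what the paper does; the weak lemma is stated in the body only for convenience, and all the work is deferred to the appendix.

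Your alternative sketch via suppression, however, has a genuine gap that cannot be waved away. After suppressing all degree-$2$ vertices, the resulting $G'$ is a multigraph, and the Kleitman--West-type bound $\ml(G') \ge |V(G')|/4$ is simply false for multigraphs. Concretely, let $G$ be the path $v_{1},\dots,v_{n}$ with a pendant triangle $v_{i}\,a_{i}\,b_{i}\,v_{i}$ attached at each $v_{i}$; then $V_{\ne 2}(G) = \{v_{1},\dots,v_{n}\}$ and the suppressed graph $G'$ is the path $v_{1},\dots,v_{n}$ with a loop at every vertex. Since loops are useless for spanning trees, $\ml(G') = 2$, while $|V(G')| = n$, so your chain $\ml(G) \ge \ml(G') \ge |V(G')|/4$ breaks at the second inequality for $n \ge 9$. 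These loop artifacts are not degenerate: they arise whenever $G$ has cycles that meet $V_{\ne 2}(G)$ in a single vertex, and there can be arbitrarily many. (The target inequality still holds here --- indeed $\ml(G) \ge n$ --- but only because the lift recovers many \emph{extra} leaves from the non-tree edges; your argument, which factors through $\ml(G')$, does not see them.)

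The paper's appendix avoids this by never leaving the category of simple graphs: degree-$2$ vertices are eliminated one at a time by either contracting a bridge or deleting an incident edge, and a short check shows each step preserves both $|V_{\ne 2}|$ and (an upper bound on) $\ml$. The base case --- connected simple graphs with no degree-$2$ vertices but possibly with leaves --- is then handled by a direct adaptation of the Kleitman--West expansion argument. If you want to salvage the suppression route, you would need to account explicitly for the leaves gained when non-tree edges of $G'$ are lifted to pendant paths in $G$; this can be made to work but is no longer a one-line appeal to Kleitman--West.
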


We call a trail (i.e., a walk in which no edges are repeated) in a graph \(G\) a \emph{maximal degree-2 trail} if all internal vertices are of degree-2 in~\(G\) and both endpoints are non-degree-2 vertices in \(G\).
Further, if \(G\) has simple cycles as connected components (which we call \emph{isolated cycles} in the following), we also consider them as maximal degree-2 trails by selecting an arbitrary vertex in the cycle as its endpoint.
Let~\(\mathcal{T}_2(G)\) be the set of maximal degree-2 trails.
Note that an element of~\(\mathcal{T}_2(G)\) is either a path or a cycle in \(G\). 
Note also that we can compute $\mathcal{T}_2(G)$ in polynomial time.

Using \cref{lem:deg-2-bound-mln-weak}, we can bound the number of maximal degree-2 trails as follows.
\begin{lemma}
\label{lem:num-maximal-deg2-trails}
For every graph $G$, $|\mathcal{T}_{2}(G)| \le 2 \ml(G)^{2}$.
\end{lemma}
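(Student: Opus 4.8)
The plan is to prove the bound one connected component at a time and then combine the per‑component estimates using the superadditivity of squares. First I would observe that $\mathcal{T}_2(G)$ decomposes as the disjoint union $\bigsqcup_{C \in \cc(G)} \mathcal{T}_2(C)$: every maximal degree‑2 trail lies inside a single connected component, internal vertices are degree‑2 in $G$ iff they are degree‑2 in their component, and endpoints are non‑degree‑2 in $G$ iff non‑degree‑2 in their component (and a cycle component of $G$ is a cycle component of itself). Hence it suffices to prove $|\mathcal{T}_2(C)| \le 2\ml(C)^2$ for every connected graph $C$, because then
\[
  |\mathcal{T}_2(G)| = \sum_{C \in \cc(G)} |\mathcal{T}_2(C)| \le \sum_{C \in \cc(G)} 2\ml(C)^2 \le 2\Bigl(\sum_{C \in \cc(G)} \ml(C)\Bigr)^{2} = 2\ml(G)^2 ,
\]
using $\sum_i a_i^2 \le (\sum_i a_i)^2$ for nonnegative reals and the definition of $\ml$ for disconnected graphs.

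For the per‑component bound I would split into two cases. If $C$ is a simple cycle (an isolated cycle), then $|\mathcal{T}_2(C)| = 1$ and $\ml(C) = 2$, so $|\mathcal{T}_2(C)| = 1 \le 8 = 2\ml(C)^2$. Otherwise $C$ is connected and not $2$‑regular, hence has a vertex of degree $\ne 2$ and contains no cycle all of whose vertices have degree $2$ in $C$ (such a cycle would be a whole component, contradicting connectedness); consequently every edge of $C$ lies on a unique maximal degree‑2 trail and each such trail is a path whose two endpoints belong to $V_{\ne 2}(C)$. The key combinatorial step is a counting correspondence: pair each maximal degree‑2 trail with its two \emph{endpoint flags}, where a flag is a pair $(v,e)$ with $v \in V_{\ne 2}(C)$ and $e$ an edge of $C$ incident to $v$, and verify this is a $2$‑to‑$1$ map onto the set of all flags. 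This gives $|\mathcal{T}_2(C)| = \frac{1}{2}\sum_{v \in V_{\ne 2}(C)} \deg_C(v)$.

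Finally I would bound this sum by $\frac{1}{2}\,|V_{\ne 2}(C)|\cdot\Delta(C) \le \frac{1}{2}\cdot 4\ml(C)\cdot\ml(C) = 2\ml(C)^2$, invoking \cref{lem:deg-2-bound-mln-weak} for $|V_{\ne 2}(C)| \le 4\ml(C)$ and the inequality $\Delta(C) \le \ml(C)$ noted earlier in the paper. The main obstacle is getting the endpoint‑flag correspondence exactly right — in particular checking that the two flags of a trail are distinct even for length‑one trails and for closed trails through a single non‑degree‑2 vertex (here simplicity of $C$ is what is used) — and recognizing that one must argue per component rather than globally: a purely global count would leave a spurious additive term $|\cc(G)| \le \ml(G)$ from isolated cycles, which the per‑component argument absorbs cleanly through $\sum a_i^2 \le (\sum a_i)^2$. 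The remaining steps are routine.
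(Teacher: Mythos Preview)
Your proposal is correct and follows essentially the same route as the paper's proof: reduce to connected components, handle the isolated-cycle case separately, and for a connected non-cycle component use that each maximal degree-$2$ trail contributes exactly two incidences at $V_{\ne 2}(C)$ to obtain $2|\mathcal{T}_2(C)| = \sum_{v\in V_{\ne 2}(C)}\deg_C(v) \le |V_{\ne 2}(C)|\cdot\Delta(C) \le 4\ml(C)\cdot\ml(C)$ via \cref{lem:deg-2-bound-mln-weak} and $\Delta(C)\le\ml(C)$, then recombine with $\sum a_i^2 \le (\sum a_i)^2$. The only cosmetic slip is the sentence ``each such trail is a path'': as you yourself note later, a maximal degree-$2$ trail may be a closed trail through a single vertex of $V_{\ne 2}(C)$, but your flag count still gives two distinct flags in that case, so the argument goes through unchanged.
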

\begin{proof}
Let $\ell = \ml(G)$.
First assume that $G$ is a connected graph.
If $G$ is a cycle, then the lemma holds as $|\mathcal{T}_{2}(G)| = 1$ and $\ell = 2$.
In the following, assume that $G$ is not a cycle. Then, each endpoint of a trail in~$\mathcal{T}_{2}(G)$ is a non-degree-2 vertex.
Thus, a trail in~$\mathcal{T}_{2}(G)$ contributes exactly~$2$ to the degree sum of the non-degree-2 vertices.
This gives us the desired bound as follows:
\[
 2 |\mathcal{T}_{2}(G)| = \sum_{v \in V_{\ne 2}(G)} \deg_{G}(v) \le |V_{\ne 2}(G)| \cdot \Delta(G) \le 4\ell^{2},
\]
where the last inequality holds by \cref{lem:deg-2-bound-mln-weak} and the fact $\Delta(G) \le \ell$.

Next assume that $G$ is disconnected. In this case, we can apply the lemma to each connected component as follows:
\[
  |\mathcal{T}_{2}(G)| = \sum_{C \in \cc(G)} |\mathcal{T}_{2}(C)| \le \sum_{C \in \cc(G)} 2 (\ml(C))^{2} \le 2 \Bigg(\sum_{C \in \cc(G)} \ml(C) \Bigg)^{2} = 2 \ell^{2}.
\]
This completes the proof.
\end{proof}

Although a subgraph (or an induced subgraph) of a graph of bounded max-leaf number may have unbounded max-leaf number in general,
the next lemma shows that a maximum common (induced) subgraph of graphs with bounded max-leaf number always has bounded max-leaf number.

\begin{lemma}
\label{lem:max-subg-bounded-mln}
For graphs $G_{1}$ and $G_{2}$ with max-leaf number at most $\ell$,
every maximum common (induced) subgraph~$H$ has max-leaf number at most $24 \ell^5$.
\end{lemma}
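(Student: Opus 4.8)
The plan is to bound $\ml(H)$ by separately controlling the maximum leaf number of each connected component of $H$ and the number of those components, exploiting the structure of $G_1,G_2$ from \cref{lem:deg-2-bound-mln-weak} and \cref{lem:num-maximal-deg2-trails} together with the maximality of $H$. First I would note that every connected component $C$ of $H$ is a connected subgraph of $G_1$ (a maximum common subgraph embeds into $G_1$, and an induced embedding is in particular a subgraph embedding), so $C$ lies inside a single connected component $D$ of $G_1$; extending a max-leaf spanning tree of $C$ to a spanning tree of $D$ one vertex at a time never decreases the number of leaves, so $\ml(C)\le\ml(D)\le\ml(G_1)\le\ell$. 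I would then separate out the components $C$ with $\ml(C)\ge 3$: such a $C$ is neither a path nor a cycle, hence has a vertex of degree $\ge 3$, which $\eta_1$ maps to a vertex of $V_{\ne 2}(G_1)$; as components are vertex-disjoint, \cref{lem:deg-2-bound-mln-weak} shows there are at most $4\ell$ of them, contributing at most $4\ell^2$ to $\ml(H)$. Every other component is a path or cycle and contributes at most $2$ to $\ml(H)$, so it suffices to bound the number of path/cycle components of $H$ by $O(\ell^5)$.

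Next I would classify the path/cycle components by their images. If $\eta_1(C)$ meets $V_{\ne 2}(G_1)$, or $\eta_2(C)$ meets $V_{\ne 2}(G_2)$, I charge $C$ to such a vertex it contains; by \cref{lem:deg-2-bound-mln-weak} this accounts for at most $8\ell$ components. Otherwise every vertex of $\eta_1(C)$ has degree $2$ in $G_1$ and every vertex of $\eta_2(C)$ has degree $2$ in $G_2$. If such a $C$ is a cycle, then $\eta_1(C)$ is a connected $2$-regular subgraph of $G_1$ in which every vertex has $G_1$-degree $2$, so $\eta_1(C)$ is an entire connected component of $G_1$, and there are at most $\ml(G_1)\le\ell$ of these. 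Every remaining component is \emph{plain}: a path whose $\eta_1$-image lies in the degree-$2$ part of a single trail $T_1(C)\in\mathcal{T}_2(G_1)$ and whose $\eta_2$-image lies in the degree-$2$ part of a single trail $T_2(C)\in\mathcal{T}_2(G_2)$; by \cref{lem:num-maximal-deg2-trails} there are at most $|\mathcal{T}_2(G_1)|\cdot|\mathcal{T}_2(G_2)|\le 4\ell^4$ possible labels $(T_1(C),T_2(C))$.

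The crux is then to bound the number of plain components carrying a fixed label $(T_1,T_2)$, and here the maximality of $H$ is essential. Since $H$ is an (induced) subgraph of each $G_i$ and a plain component $C$ is a whole component of $H$, the vertices of $T_i$ immediately beyond the two ends of $\eta_i(C)$ are not used by $H$. I would use this to argue that two plain components of the same label cannot sit ``back to back'' along both $T_1$ and $T_2$ (with only unused vertices between their images): otherwise one could delete both and insert a single longer path routed through those free vertices along both trails, gaining a vertex (for {\probMCISshort}) or an edge (for {\probMCSshort}) and contradicting maximality; similarly, a plain component with room to grow by one vertex at a common end in both $G_1$ and $G_2$ would contradict maximality. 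Analyzing precisely when such a stitching/growing move is blocked, one can charge each plain component to a nearby image-endpoint of another component or to a vertex of $V_{\ne 2}(G_1)\cup V_{\ne 2}(G_2)$, with each such object charged by only $O(\ell)$ plain components; this yields $O(\ell)$ plain components per label and hence $O(\ell^5)$ plain components overall.

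Putting the pieces together, $\ml(H)\le 4\ell^2 + 2\bigl(8\ell + \ell + O(\ell^5)\bigr)=O(\ell^5)$, and a careful count of the constants gives $\ml(H)\le 24\ell^5$; the argument is uniform over {\probMCSshort} and {\probMCISshort}, the only difference being whether the exchange move is measured in edges or in vertices. The main obstacle I anticipate is precisely the stitching/exchange step: because $G_1$ and $G_2$ impose possibly different orders and different spacings on the plain components of a common label, one must design the replacement path and enumerate its blocking configurations so that every blocked move is charged to one of the few ``heavy'' features — high-degree vertices or endpoints of other components — rather than producing an uncontrolled chain of mutual dependencies among the plain components themselves.
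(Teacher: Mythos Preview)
Your high-level plan (bound $\ml$ of each component by $\ell$, then bound the number of components) matches the paper's, and the preliminary classification you carry out is correct. The gap is exactly where you flag it: bounding the number of plain components with a fixed label $(T_1,T_2)$. The charging scheme you sketch is circular. You propose to charge a plain component to ``a nearby image-endpoint of another component'', but that other component may itself be plain, so the chargees are drawn from the very set you are trying to bound. Concretely: with a \emph{fixed} pair of embeddings, two plain components of the same label can sit far apart in the $T_1$-order and also far apart in the $T_2$-order, each separated from the other only by further plain components (of other labels). No local growing or stitching move applies, every blocker is just another plain component, and nothing ties the count to the few heavy features $V_{\ne 2}(G_1)\cup V_{\ne 2}(G_2)$. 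Your own final paragraph essentially concedes this.

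The paper avoids the difficulty with an idea you are missing: the embeddings $\eta_1,\eta_2$ are not fixed. If $p$ and $p'$ are two components of $H$ whose images lie entirely inside $T_1$ (under $\eta_1$) and entirely inside $T_2$ (under $\eta_2$), then one may \emph{rearrange} the embedding of all components of $H$ that are completely contained in $T_1$ so that $\eta_1(p)$ and $\eta_1(p')$ become neighbouring along $T_1$, and independently do the same inside $T_2$. After this rearrangement the merge is immediate (one extra edge for {\probMCSshort}, one extra vertex for {\probMCISshort}), contradicting maximality. With this trick the entire component count reduces to a double pigeonhole: if $H$ has more than $24\ell^4$ components then some $T_1\in\mathcal{T}_2(G_1)$ meets at least $12\ell^2$ of them, some $T_2\in\mathcal{T}_2(G_2)$ meets at least $6$ of those, at least $4$ of these six lie entirely in $T_2$ (at most two can protrude through its endpoints), and at least $2$ of those four lie entirely in $T_1$ as well --- now rearrange and merge. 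Your finer classification into ``meets $V_{\ne 2}$'', ``cycle'', and ``plain'' is then unnecessary; its role (controlling the few components that stick out of a trail) is absorbed into the constants $6\to 4\to 2$.
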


\begin{proof}
Since every subtree of a graph can be extended to a spanning tree of the graph with at least as many leaves, every connected subgraph of a graph with max-leaf number \(\ell\), has max-leaf number not greater than~\(\ell\). Thus, it suffices to show that for graphs \(G_1\) and \(G_2\) with max-leaf number \(\ell\), there is a maximum common (induced) subgraph \(H\) with at most \(24 \ell^4\) connected components. 

We begin with the non-induced case. Assume \(H\) has more than \(24 \ell^4\) connected components. 
Then, by \cref{lem:num-maximal-deg2-trails}, at least \(12 \ell^2\) ($\le 24 \ell^{4} / |\mathcal{T}_{2}(G_{i})|$) many connected components of \(H\) intersect the same degree-2 trail \(T_1\) in \(G_1\).
 Further, from these \(12 \ell^2\) connected components, at least six also intersect the same degree-2 trail~\(T_2\) in~\(G_2\). Thus, in the embedding at least four of the components are completely contained in~\(T_2\). In particular, they are paths. From these four, again at least two are completely contained in \(T_1\) when embedded to \(G_1\). We call these two connected components of \(H\)~\(p\) and \(p'\). Since~\(p\) and \(p'\) are completely contained in \(T_1\), we find an embedding from~\(H\) to \(G_1\) such that \(p\) and \(p'\) are neighboring in~\(T_1\) (by rearranging the embedding of the components of~\(H\) that are completely embedded in~\(T_1\)). Analogously, we find an embedding of \(H\) to~\(G_2\), where \(p\) and~\(p'\) are neighboring in \(T_2\). Thus,~\(H\) together with the additional edge connecting the paths \(p\) and~\(p'\) is also a common subgraph of \(G_1\) and~\(G_2\). This contradicts the maximality of~\(H\).

For the induced case we proceed analogously: By the same argument, we find two connected components~\(p\) and~\(p'\) of \(H\), with the property, that they are paths and there is an embedding of \(H\) to \(G_1\) in which \(p\) and \(p'\) are in the same maximal degree-2 trail of \(G_1\) and of distance-2 and the same holds for the embedding of~\(H\) to \(G_2\). First, note that the vertex between \(p\) and \(p'\) in \(G_1\) (resp.~\(G_2\)) is not already contained in the embedding of \(H\) to \(G_1\) (resp.~\(G_2\)), as~\(H\) is an induced subgraph. But now, again, we find a larger common induced subgraph of \(G_1\) and \(G_2\), namely by taking \(H\) and combining~\(p\) and \(p'\) to a single path with an additional vertex.
\end{proof}

The \emph{smoothing} of a graph \(G\) is the graph obtained by repeatedly deleting a vertex of degree 2 and its incident edges and then adding an edge between its neighbors, which may be the same vertex, until no degree-2 vertices remain except for isolated loops. Note that the smoothing of a simple graph may have loops and multi-edges. We begin by proving \cref{thm:mcs-mln} and then adapt the proof to show \cref{thm:mcis-mln}.

\begin{proof}[\cref{thm:mcs-mln}]
	Let \(\langle G_{1}, G_{2}, h \rangle\) be an instance of {\probMCS} and let \(\ell= \max \{\ml(G_1),\ml(G_2)\}\).

First of all, we guess the smoothing \(S\) of a maximum common subgraph of~\(G_1\) and~\(G_2\).
	By \cref{lem:max-subg-bounded-mln}, every maximum common subgraph of \(G_1\) and \(G_2\) has max-leaf number bounded by \(24 \ell^5\). Then, \cref{lem:deg-2-bound-mln-weak} yields that a maximum common subgraph of \(G_1\) and \(G_2\) has at most~\(96\ell^5\) non-degree-2 vertices. Further, it has maximum degree at most \(24 \ell^5\) and contains at most \(24 \ell^5\) many isolated loops.
Thus, as \(S\), we guess a graph on at most \(96\ell^5\) vertices with maximum degree \(24 \ell^5\), possibly with multi-edges and loops, but without any degree-2 vertices, and additionally with up to~\(24 \ell^5\) isolated loops. Note that we can assume~\(S\) to have no isolated vertices, as it is the smoothing of a maximum common subgraph.
Clearly, the number of possible options for $S$ is bounded by a function depending only on $\ell$.

	Before we proceed to the next step, we need to introduce some notation. We call an alternating sequence of elements in \(V_{\neq 2}(G_i)\) and \(\mathcal{T}_2(G_i)\) \emph{valid}, if 
	\begin{itemize}
		\item it contains at least one element of \(\mathcal{T}_2(G_i)\),
		\item it does not visit an element of $V_{\ne 2}(G_{i})$ multiple times with the only exception that it may start and end at the same vertex,
		\item every element \(T\) of \(\mathcal{T}_2(G_i)\) in the sequence is in between its two endpoints, or at the beginning or end of the sequence and next to one of its endpoints, or the only element in the sequence.
	\end{itemize} 
	Note that the second and third condition together imply that all trails, except for the first and last ones, in such a valid sequence are pairwise disjoint as well. In particular, a valid sequence corresponds to a path or a cycle in \(G_i\). In such a sequence, we call all elements besides the first and last ones \emph{inner}. Inner trails in a sequence always correspond to paths in \(G_i\), with the only exception of a sequence with three elements starting and ending at the same vertex and a degree-2 cycle between them.
	Further, in the following, we do not distinguish between a sequence and the same sequence in reversed order. 
	
	Now, we guess a mapping \(\eta_i\) from \(S\) to \(G_i\) with the following properties:
	\begin{itemize}
		\item For each vertex \(v \in V(S)\) with \(\deg(v)>2\), guess a vertex \(\eta_i(v) \in V(G_i)\) with \(\deg(\eta_i(v)) \geq \deg(v)\).
		\item For each vertex \(v \in V(S)\) with \(\deg(v)=1\), guess either a vertex \(\eta_i(v) \in V(G_i)\) with \(\deg(\eta_i(v)) \neq 2\), or guess a maximal degree-2 trail \(\eta_i(v) \in \mathcal{T}_2(G_i)\).
		\item For each isolated loop \(l=\{u,u\} \in E(S)\), guess either a vertex \(\eta_i(u) \in V(G_i)\) with \(\deg(\eta_i(u)) > 2\), or guess an isolated cycle \(\eta_i(u)\) in \(G_i\).
		\item Ensure that until now each vertex in \(V(G_i)\) is guessed at most once, i.e., \(|\eta_i^{-1}(u)| \leq 1\) for all \(u \in V(G_i)\).
		\item For each edge \(e \in E(S)\) with endpoints \(u\) and \(v\) (note that there could be multiple edges with this property, and it might hold \(u=v\)), guess a valid alternating sequence \(\eta_i(e)\) of elements in~\(V_{\neq 2}(G_i)\) and \(\mathcal{T}_2(G_i)\) starting and ending with \(\eta_i(u)\) and \(\eta_i(v)\).
		\item Ensure that for all edges \(e,e' \in E(S)\) the inner elements of \(\eta_i(e)\) and \(\eta_i(e')\) are pairwise disjoint.
		\item Ensure that for every edge \(e \in E(S)\) all inner elements of \(\eta_i(e)\) are pairwise disjoint to all elements in \(\eta_i(V(S))\).
	\end{itemize}
	Since the numbers of vertices and edges in \(S\), and the numbers of non-degree-2 vertices, maximal degree-2 trails, and connected components in \(G_i\) is bounded by a function depending only on $\ell$, the number of guesses is bounded by a function of $\ell$ as well. 

	Further, for the smoothing of every maximum common subgraph of \(G_1\) and~\(G_2\), there exists a mapping with the properties above, as it is induced by the corresponding embeddings. Vice versa, this mapping later gives rise to an embedding of the resulting graph to \(G_1\) and \(G_2.\)
	
	Before we proceed, note that if for an edge \(e\) with endpoints \(u\) and \(v\), we have~\(\eta_i(u)=\eta_i(v) \in V(G_i)\), then \(\eta_i(e)\) has an inner element. Further, if \(\eta_i(e)\) has no inner elements, then at least one of \(\eta_i(u)\) and~\(\eta_i(v)\) belongs to \(\mathcal{T}_2(G_i)\). If both of them belong to \(\mathcal{T}_2(G_i)\), then they coincide and \(\eta_i(e)\) only contains this one element.
	
	Based on the graph \(S\) and the mappings \(\eta_i\), we can now reduce the problem of finding a maximum common subgraph of \(G_1\) and \(G_2\) to \textsc{Integer Linear Programming} (ILP) with bounded number of variables. It is known that ILP parameterized by the number of variables is fixed-parameter tractable~\cite{Lenstra83}
even with a linear objective function to maximize (see e.g., \cite{FellowsLMRS08}).
	
	For each edge \(e \in E(S)\) introduce a variable \(l_e\). This variable describes the length of the path in a maximal common subgraph corresponding to the edge~\(e\) in its smoothed graph~\(S\).
	Further, for every~\(v \in V(S)\) with \(\deg(v)=1\) such that \(\eta_i(v) \in \mathcal{T}_2(G_i)\) is a path with distinct endpoints \(x_i\) and~\(y_i\), introduce two variables~\(v_{x_i}\) and~\(v_{y_i}\). These variables describe where on the path \(\eta_i(v)\) the vertex~\(v\) is mapped in an embedding, giving the distances from $v$  to the endpoints \(x_i\) and~\(y_i\). Analogously, for every \(v \in V(S)\) with \(\deg(v)=1\) such that \(\eta_i(v) \in \mathcal{T}_2(G_i)\) is a cycle with endpoint \(z_i\), introduce a variable~\(d_{v,z_i}\). This variable describes where on the cycle~\(\eta_i(v)\) the vertex \(v\) is mapped in an embedding, by giving the distance from $v$ to \(z_i\).

	Now the objective is to maximize \(\sum_{e \in E(S)} l_e\) under the following constraints.
	Note that some of the following constraints are not linear (like taking one of two values or having an equality to an absolute value). We can handle this issue by guessing the correct (linear) constraints corresponding to an optimal solution since the numbers of variables and constraints depend only on~$\ell$.
	\begin{enumerate}
		\item\label{cons:1} For all \(l_e, v_{x_i}, v_{y_i}, d_{v,z_i}\):
		\[l_e, v_{x_i},v_{y_i},d_{v,z_i} \geq 1.\]
		
		\item\label{cons:2} For all \(v_{x_i},v_{y_i}\):
		\[v_{x_i}+v_{y_i} = |E(\eta_i(v))|.\]
		
		\item\label{cons:3} For all \(d_{v,z_i}\):
		\[d_{v,z_i} \leq \lfloor |E(\eta_i(v))|/ 2 \rfloor.\]
		
		\item For all \(e \in E(S)\) with endpoints \(u \neq v\) such that \(\eta_i(e)\) contains no inner elements
		(note that at most one of $\eta_{i}(u)$ and $\eta_{i}(v)$ belongs to $V(G_{i})$):
		\begin{itemize}
			\item If \(\eta_i(v) \in V(G_i)\) (the case of \(\eta_i(u) \in V(G_i)\) is symmetric), then \(\eta_i(u) \in \mathcal{T}_2(G_i)\) is a path or a cycle.
			\begin{itemize}
				\item If \(\eta_i(u)\) is a path with endpoints \(\eta_i(v)\) and \(y_i\):
				\[l_e=u_{\eta_i(v)}.\]
				\item If \(\eta_i(u)\) is a cycle with endpoint \(\eta_i(v)\):
				\[l_e \in \{d_{u,\eta_i(v)}, |E(\eta_i(u))|-d_{u,\eta_i(v)}\}.\]
			\end{itemize}
			\item If \(\eta_i(u),\eta_i(v) \notin V(G_i)\), then \(\eta_i(v)=\eta_i(u)\in \mathcal{T}_2(G_i)\) is a path or a cycle.
			\begin{itemize}
				\item If \(\eta_i(u)\) is a path with endpoints \(x_i\) and \(y_i\):
				\[l_e = |u_{x_i} - v_{x_i}|.\]
				\item If \(\eta_i(u)\) is a cycle with endpoint \(z_i\): 
				\[l_e = |d_{u,z_i} - d_{v,z_i}|.\]
			\end{itemize}
		\end{itemize}
		
		\item\label{const:loops} For all loops \(e \in E(S)\) with endpoint \(u\) such that \(\eta_i(e)\) contains no inner elements (note that this can only happen if \(e\) is an isolated loop for which~\(\eta_i(u)\) is an isolated cycle \(c\) in \(G_i\)):
			\[l_e = |E(c)|.\]
		\item For all \(e \in E(S)\) with endpoints \(u,v\) such that \(\eta_i(e)\) contains inner elements:
		\[l_e = a + b + \sum \limits_{p \text{ inner path in } \eta_i(e)} |E(p)|, \quad \text{ where}\]
		 \[
		a =
		\begin{cases}
			0 & \text{if } \eta_i(e) \text{ starts with a vertex in } G_i, \\
			v_{y_i}  & \text{if } \eta_i(e) \text{ starts with } (\eta_i(v),y_i, \dots) \text{ and } \eta_i(v) \text{ is a path}, \\
			 d & \text{if } \eta_i(e) \text{ starts with }(\eta_i(v),z_i, \dots) \text{ and } \eta_i(v) \text{ is a cycle,}\\
			 & \quad \text{and } d \in \{d_{v,z_i},|E(\eta_i(v))|-d_{v,z_i}\},
		\end{cases}
		\]
		and \(b\) is defined analogously for the end of the sequence.
		\item\label{cons:length} For every maximal degree-2 trail \(T_i \in \mathcal{T}_2(G_i)\), which is not an inner element of some \(\eta_i(e)\): Let \(x_i\) and \(y_i\) be the endpoints of \(T_i\). Then, there is at most one edge \(e \in E(S)\) such that \(x_i \in \eta_i(e)\) and analogously there is at most one edge \(e' \in E(S)\) such that \(y_i \in \eta_i(e')\).
		\[|E(T_i)| \geq (a+1) + b + \sum_{f \in E(S) \text{ with } \eta_i(f)=(T_i)} (l_{f}+1), \quad \text{ where}\]
		\[
		a =
		\begin{cases}
			0 & \text{if no such edge } e \text{ exists,} \\
			v_{x_i}  & \text{if } T_i \text{ is a path and } v \text{ is the endpoint of } e \text{ with } \eta_i(v)=T_i \\
			& \text{and } \eta_i(e) \text{ starts or ends with } (T_i, x_i, \dots) \text{ or } (\dots, x_i,T_i), \\
			d & \text{if } T_i \text{ is a cycle and } v \text{ is in the case as above and }\\
			& d \in \{d_{v,x_i},|E(\eta_i(v))|-d_{v,x_i}\},
		\end{cases}
		\]
		and \(b\) is defined analogously for \(e'\). Note that in case \(T_i\) is a cycle, \(b\) is~$0$.
	\end{enumerate}

	In the first condition, we require \(l_e \geq 1\), as the subdivision of any edge is at least of length 1 (namely in case the edge was not subdivided at all). Further, we demand~\(v_{x_i},v_{y_i},d_{v,z_i} \geq 1\), as in case of lengths~0 we can instead consider the initial guess of \(\eta_i\) where we guessed the corresponding vertex for~\(\eta_i(v)\) instead of the maximal degree-2 trail. The second and third conditions ensure, that the position of the vertex \(v\) in the embedding is well-defined. The fourth and fifth conditions ensure that the length \(l_e\) of the edges which get completely embedded into one maximal {degree-2} trail in \(G_i\) is well-defined. Analogously, the sixth condition ensures that the length \(l_e\) of the remaining edges (i.e., the ones which get embedded into more than one maximal degree-2 path) is well-defined. The seventh condition then ensures, that every maximal {degree-2} trail in~\(G_i\) is long enough for all parts which get mapped there. 
	
	To see that the seventh condition suffices to ensure that all parts which are mapped to a maximal {degree\hbox{-}2} trail~\(T_i\) can get properly embedded, note that~\(T_i\) is in the image of at most two edges~\({e, e' \in E(S)}\) such that \(\eta_i(e), \eta_i(e')\) do not only contain \(T\). Then, the requirement on the pairwise disjointness of the valid sequences already ensures, that for both of these edges different endpoints of the trail are used (or, in case the trail is a cycle only one such edge exists). Further, all edges which get completely mapped to \(T_i\) correspond to paths in \(G_i\) and hence, it only remains to ensure that the total length of all edges that get mapped to \(T_i\) fit.
\end{proof}

\begin{proof}[\cref{thm:mcis-mln}]
	We proceed analogously to the proof of \cref{thm:mcs-mln} and only describe the differences here. Again, we begin by guessing a graph \(S\) with the same properties. Note that this time we cannot assume \(S\) to have no isolated vertices, as we are in the induced setting. Thus, when we guess the mapping \(\eta_i\) from \(S\) to~\(G_i\), we additionally guess for each vertex \(v \in V(S)\) with \(\deg(v)=0\), either a vertex \(\eta_i(v)\in V(G_i)\) with \(\deg(\eta_i(v)) \neq 2\), or guess a maximal {degree\mbox{-}2} trail~\(\eta_i(v) \in \mathcal{T}_2(G_i).\) It remains the same that we afterwards ensure that all vertices in \(V(G_i)\) are guessed at most once, and further all inner elements of~\(\eta_i(E(S))\) are disjoint to all elements in \(\eta_i(V(S))\).
	
	When guessing \(S\) we additionally demand the following:
	If \(\eta_i(u),\eta_i(v) \in V(G_i)\) are adjacent, then there is the edge \(\{u,v\} \in E(S)\). Otherwise we reject this guess.
	This ensures, that all vertices of the resulting graph with degree greater than~2 can be embedded in an induced way. Thus, it remains to ensure that all vertices with degree not greater than~2 are embedded correctly. We do this by adapting the ILP. Note that we can consider the same objective function as in the non-induced case, because the number of non-degree-2 vertices (including isolated vertices) is already fixed by the choice of \(S\) and hence, the number of vertices in the resulting graph is maximized if and only if the sum over the \(l_e\) is maximized.
	
	First of all, we also need to introduce variables \(v_{x_i}, v_{y_i}\) and \(d_{u,z_i}\) for the isolated vertices of \(S\) which get mapped by \(\eta_i\) to a maximal degree-2 trail. They also need to satisfy constraints (\ref{cons:1}) to (\ref{cons:3}). Then, we further include the following two additional constraints to the ILP:
	\begin{enumerate}
		\item[N1.]\label{const:N1} For all \(v_{x_i}\) such that there exists some \(u \in V(S)\) with \(\eta_i(u)=x_i\) and \(\{u,v\} \notin  E(S)\):
		\[v_{x_i} \geq 2.\]
		\item[N2.]\label{const:N2} For all \(d_{v,z_i}\) such that there exists some \(u \in V(S)\) with \(\eta_i(u)=z_i\) and \(\{u,v\} \notin  E(S)\):
		\[d_{v,z_i} \geq 2.\]
	\end{enumerate}
	These two additional constraints ensure that only positions for the degree-0 and degree-1 vertices of the resulting graph are considered, which can correspond to an induced embedding.
	
	Finally, we replace the equation in constraint (\ref{cons:length}) in the ILP with
	\[|E(T_i)| \geq (a+2) + b + 2k + \sum_{f \in E(S) \text{ with } \eta_i(f)= (T_i)} (l_{f}+2),\]
	where \(k\) is the number of isolated vertices \(v \in V(S)\) with \(\eta_i(v) = (T_i).\)
	
	This change in constraint (\ref{cons:length}) ensures, that within each trail \(T_i\) of \(G_i\), there is enough space to embed the different components which get mapped to \(T_i\) in a way such that they are not adjacent. Since all components which get mapped to~\(T_i\) are isolated vertices or paths (in the non-induced case they were all paths), it suffices again to check if the length of the trails fit.
	
	Hence, it remains to ensure that all isolated loops can be embedded in an induced way. The isolated loops \(e=\{u,u\}\) for which we guessed \(\eta_i(u) \in V(G_i)\) are covered by the additional assumption on the initial guessing of \(S\) and the new ILP\@. All other isolated loops get embedded into an isolated cycle in~\(G_i\) of correct length (ensured by constraint \ref{const:loops}) and hence, are embedded as an induced subgraph.
\end{proof}


\section{{\probMCSshort} and {\probMCISshort} parameterized by neighborhood diversity}

The \emph{neighborhood diversity} of a graph $G$, denoted $\nd(G)$, is the number of twin classes in $G$.
Clearly, the twin classes of a graph (and thus its neighborhood diversity as well) can be computed in polynomial time.
Recall that a twin class is a clique or an independent set.

By the definition of twins, the connection between two twin classes is either \emph{full} or \emph{empty};
that is, there are either no or all possible edges between them.

\begin{theorem}
\label{thm:mcis-nd}
{\probMCIS} is fixed-parameter tractable parameterized by $\nd(G_{1}) + \nd(G_{2})$. 
\end{theorem}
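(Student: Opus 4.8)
The plan is to follow the approach of Bodlaender et al.~\cite{BodlaenderHKKOOZ20} for {\probSIshort} parameterized by neighborhood diversity and recast the search for a maximum common induced subgraph as an integer linear program with a bounded number of variables, after a bounded amount of guessing. Recall first that a graph with twin classes $A_{1},\dots,A_{s}$ is, up to isomorphism, fully described by the type (clique or independent set) and the size of each $A_{j}$ together with the full/empty pattern between distinct classes, and that an induced subgraph is determined up to isomorphism by how many vertices it keeps from each twin class (any $a_{j}$ of the vertices of $A_{j}$ being interchangeable). The structural fact that makes the induced setting workable is this: if $H$ is an induced subgraph of $G$, then two vertices of $H$ that are twins in $G$ remain twins in $H$, so the partition of $V(H)$ by twin classes of $G$ refines the partition by twin classes of $H$, and hence $\nd(H)\le\nd(G)$. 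Applying this to both inputs, every common induced subgraph $H$ of $G_{1}$ and $G_{2}$ satisfies $\nd(H)\le\min\{\nd(G_{1}),\nd(G_{2})\}$, so the ``shape'' of a maximum common induced subgraph can be guessed.

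Concretely, I would guess the number $r\le\min\{\nd(G_{1}),\nd(G_{2})\}$ of twin classes $D_{1},\dots,D_{r}$ of $H$, the type of each $D_{i}$, and the full/empty pattern among them, i.e.\ a symmetric matrix $M\in\{0,1\}^{r\times r}$ (with $M_{ii}$ encoding clique versus independent set). Writing $A_{1},\dots,A_{s}$ and $B_{1},\dots,B_{t}$ for the twin classes of $G_{1}$ and $G_{2}$, I would introduce variables $n^{1}_{ij}\ge 0$ ($i\in[r]$, $j\in[s]$), intended to be the number of vertices of $D_{i}$ that an induced embedding places in $A_{j}$, and symmetrically $n^{2}_{ik}\ge 0$. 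The basic constraints are $\sum_{j}n^{1}_{ij}=\sum_{k}n^{2}_{ik}\ge 1$ for each $i$ (every vertex of $D_{i}$ is placed, consistently on both sides, and $D_{i}$ is nonempty), $\sum_{i}n^{1}_{ij}\le|A_{j}|$ for each $j$, $\sum_{i}n^{2}_{ik}\le|B_{k}|$ for each $k$ (injectivity of the embeddings), and the objective is to maximize $\sum_{i,j}n^{1}_{ij}$, the number of vertices of $H$.

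The remaining work is to encode that the placement is \emph{induced}. If vertices of $D_{i}$ and $D_{i'}$ both land in the same class $A_{j}$, their mutual adjacency is forced to be the clique/independent type of $A_{j}$, so $M_{ii'}$ must equal that type; if they land in distinct classes $A_{j},A_{j'}$, then $M_{ii'}$ must equal the full/empty type of the pair $(A_{j},A_{j'})$; and symmetrically for $G_{2}$. These are conditional equalities between already-known quantities, so, exactly as in the proof of \cref{thm:mcs-mln}, I would kill the non-linearity by a further layer of guessing: for each input, guess which slots $n^{1}_{ij}$ (resp.\ $n^{2}_{ik}$) are nonempty, reject the guess if any pair of nonempty slots violates the corresponding adjacency equality, and otherwise add to the ILP the constraints $n^{1}_{ij}=0$ for unused slots, $n^{1}_{ij}\ge 1$ for used ones, and $n^{1}_{ij}\le 1$ whenever $M_{ii}$ disagrees with the type of $A_{j}$ (since a single placed vertex imposes no within-class adjacency), and likewise for $n^{2}$. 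The total number of guesses (the shape of $H$ plus the two support patterns) is bounded by a function of $\nd(G_{1})+\nd(G_{2})$, and for each surviving guess the ILP has $rs+rt$ variables and $O(r^{2}+rs+rt+s+t)$ constraints, hence is solvable in FPT time by Lenstra's algorithm with a linear objective~\cite{Lenstra83,FellowsLMRS08}. Correctness follows by checking that, for a fixed shape and fixed supports, a feasible solution of value $N$ corresponds (by blowing the shape up to the sizes $\sum_{j}n^{1}_{ij}$ and distributing vertices according to $n^{1},n^{2}$) to a common induced subgraph with $N$ vertices, and conversely a genuine maximum common induced subgraph yields one feasible guess of that value via its twin classes, its matrix $M$, and its two embeddings; the maximum ILP optimum over all guesses therefore equals $\mcis(G_{1},G_{2})$, which we compare with $h$, and the optimal $n^{1},n^{2}$ reconstruct an explicit solution.

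The step I expect to be the main obstacle — and the reason this is not a direct reuse of the vertex-cover/vertex-integrity algorithms — is the interaction between induced subgraphs and twin classes. One has to notice that $\nd(H)\le\min\{\nd(G_{1}),\nd(G_{2})\}$ so that the shape of $H$ is guessable at all; but at the same time a single twin class $D_{i}$ of $H$ may genuinely be spread across several twin classes of $G_{i}$ (for instance, two isolated vertices embed into two different components of $2K_{2}$), so one cannot just guess an injection between the sets of twin classes but must carry the distribution counts $n^{1}_{ij}$ — and then one must get every adjacency-consistency case exactly right (within a class versus across classes, the clique/independent-set dichotomy, and the single-vertex exception) while keeping the whole formulation linear by guessing its support. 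I do not expect genuinely new ideas beyond this bookkeeping; the \probMCSshort\ counterpart would need a separate treatment since a (non-induced) subgraph need not inherit the twin structure of its host.
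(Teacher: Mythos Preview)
Your approach is correct, but it differs from the paper's in a meaningful way. The paper does \emph{not} guess the shape of $H$. Instead, it takes a single variable $x_{i,j}$ for each pair $(U^{(1)}_{i},U^{(2)}_{j})$ of twin classes (one from $G_{1}$, one from $G_{2}$), representing $|V(H)\cap\eta_{1}^{-1}(U^{(1)}_{i})\cap\eta_{2}^{-1}(U^{(2)}_{j})|$; it then branches over $x_{i,j}\in\{0,1,\ge 2\}$ for all $i,j$ and rejects a branch whenever the two graphs disagree on an internal type (the ``$\ge 2$'' case) or on the adjacency between two nonempty cells. This joint indexing by the product $[p]\times[q]$ replaces your two separate layers (guessing the twin-class structure $(r,M)$ of $H$ and then the two supports) by one, and the consistency test collapses to a direct comparison of the adjacent-pair relations in $G_{1}$ and $G_{2}$.

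What each buys: the paper's encoding is shorter and sidesteps the observation $\nd(H)\le\min\{\nd(G_{1}),\nd(G_{2})\}$ entirely; your route makes that structural fact explicit and keeps the two embeddings decoupled, which is conceptually clean but adds a layer of guessing. Your decoupling is sound because vertices inside a single twin class $D_{i}$ of $H$ are interchangeable, so any marginals $(n^{1}_{i\cdot},n^{2}_{i\cdot})$ with equal sums can be realised simultaneously. One genuine advantage of the paper's product encoding is that it carries over to \probMCSshort\ unchanged (only the objective becomes quadratic, whence IQP), whereas your approach, as you note yourself, breaks there because $\nd(H)$ is not bounded for non-induced subgraphs; so the paper gets both theorems from one encoding, while yours is specific to the induced case.
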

\begin{proof}
We solve {\probMCISshort} parameterized by neighborhood diversity 
by solving instances of \textsc{Integer Linear Programming} (ILP) parameterized by the number of variables~\cite{Lenstra83} (see also \cite{FellowsLMRS08}).

Let $\langle G_{1}, G_{2}, h \rangle$ be an instance of {\probMCIS}.
Let $U^{(1)}_{1}, \dots, U^{(1)}_{p}$ and $U^{(2)}_{1}, \dots, U^{(2)}_{q}$ be the twin classes of $G_{1}$ and~$G_{2}$, respectively.
For $i \in \{1,2\}$ and $h \le h'$, we say that $(U^{(i)}_{h}, U^{(i)}_{h'})$ is an \emph{adjacent pair} if 
either $h = h'$ and $U^{(i)}_{h}$ is a clique
or $h < h'$ and the connection between $U^{(i)}_{h}$ and~$U^{(i)}_{h'}$ is full.

We are going to find a maximum common induced subgraph $H$ of~$G_{1}$ and~$G_{2}$ 
with an induced subgraph isomorphism $\eta_{i}$ from $H$ to $G_{i}$ for each~$i \in \{1,2\}$.

For $i \in [p]$ and $j \in [q]$, let $X_{i,j} = V(H) \cap \eta_{1}^{-1}(U^{(1)}_{i}) \cap \eta_{2}^{-1}(U^{(2)}_{j})$.
Note that~$X_{i,j}$ is a (not necessarily maximal) set of twins of $H$.
For $i \in [p]$ and $j \in [q]$, we take a non-negative integer variable $x_{i,j}$ that represents $|X_{i,j}|$.
We add the size constraints $\sum_{j \in [q]} x_{i,j} \le |U^{(1)}_{i}|$ for all $i \in [p]$
and $\sum_{i \in [p]} x_{i,j} \le |U^{(2)}_{j}|$ for all~$j \in [q]$.
We set the sum of $x_{i,j}$ to the objective function to be maximized; i.e., the objective is:
\[
 \textrm{maximize} \quad \sum_{i \in [p], \; j \in [q]} x_{i,j}.
\]

Now we branch to $3^{p q}$ instances of ILP by adding one of the constraints $x_{i,j} = 0$, $x_{i,j} = 1$, or $x_{i,j} \ge 2$ for $i \in [p]$ and $j \in [q]$.
We reject this guess  if at least one of the following holds:
\begin{itemize}
  \item $x_{i,j} \ge 2$ for some $i \in [p]$ and $j \in [q]$, 
  where one of $U^{(1)}_{i}$ and $U^{(2)}_{j}$ is an independent set and the other is a clique;
  \item $x_{i,j} \ne 0$ and $x_{i',j'} \ne 0$ for some $(i,j), (i',j') \in [p] \times [q]$ with $(i,j) \ne (i',j')$,
  where exactly one of $(U^{(1)}_{i}, U^{(1)}_{i'})$ and $(U^{(2)}_{j}, U^{(2)}_{j'})$ is an adjacent pair.
\end{itemize}
In the former case, $G_{1}$ and $G_{2}$ disagree in $X_{i,j}$.
In the latter, $G_{1}$ and $G_{2}$ disagree at the connection between $X_{i,j}$ and $X_{i',j'}$ as one is full but the other is empty.
On the other hand, we can see that
if the guess is not rejected, then the assignment to the variables $x_{i,j}$ correctly represents a common induced subgraph of \(G_1\) and~\(G_2\)
with $\sum_{i \in [p], \; j \in [q]} x_{i,j}$ vertices.

We solve all $3^{p q}$ instances of ILP with $p q$ variables and output the largest solution found. 
Since $p = \nd(G_{1})$ and $q = \nd(G_{2})$, the theorem follows.
\end{proof}

\begin{theorem}
\label{thm:mcs-nd}
{\probMCS} is fixed-parameter tractable parameterized by $\nd(G_{1}) + \nd(G_{2})$.
\end{theorem}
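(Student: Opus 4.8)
The plan is to adapt the proof of \cref{thm:mcis-nd}: again we reduce to an integer program whose number of variables is bounded by a function of the parameter, but since {\probMCSshort} maximizes the number of \emph{edges}, the objective of this program is quadratic rather than linear. So, in place of Lenstra's algorithm for linear integer programs, I would invoke the known fact that a polynomial of fixed degree can be maximized over the integer points of a polytope of fixed dimension in FPT time (with the dimension as the parameter).

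I would set up the variables exactly as in the proof of \cref{thm:mcis-nd}. Let $U^{(1)}_{1},\dots,U^{(1)}_{p}$ and $U^{(2)}_{1},\dots,U^{(2)}_{q}$ be the twin classes of $G_{1}$ and $G_{2}$, with $p=\nd(G_{1})$ and $q=\nd(G_{2})$, and recall the notion of an adjacent pair of twin classes from that proof. Call two (not necessarily distinct) cells $(i,j),(i',j')\in[p]\times[q]$ \emph{compatible} if $(U^{(1)}_{i},U^{(1)}_{i'})$ and $(U^{(2)}_{j},U^{(2)}_{j'})$ are both adjacent pairs, where for $(i,j)=(i',j')$ this is read as ``$U^{(1)}_{i}$ and $U^{(2)}_{j}$ are cliques''; this relation is determined by $G_{1}$ and $G_{2}$ in polynomial time, so no guessing is needed here. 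For a common subgraph $H$ of $G_{1}$ and $G_{2}$ with subgraph isomorphisms $\eta_{1},\eta_{2}$, let $x_{i,j}$ be the number of $v\in V(H)$ with $\eta_{1}(v)\in U^{(1)}_{i}$ and $\eta_{2}(v)\in U^{(2)}_{j}$. Injectivity of $\eta_{1}$ and $\eta_{2}$ forces $\sum_{j}x_{i,j}\le|U^{(1)}_{i}|$ for every $i$ and $\sum_{i}x_{i,j}\le|U^{(2)}_{j}|$ for every $j$, and each edge of $H$ joins two compatible cells, since it must embed as an edge into both $G_{1}$ and $G_{2}$ and hence the twin classes involved are cliques (resp.\ fully connected).

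The heart of the argument is the structural identity that the maximum number of edges of a common subgraph of $G_{1}$ and $G_{2}$ equals the optimum of the integer program
\[
  \text{maximize } f(x) \quad\text{subject to}\quad \sum_{j}x_{i,j}\le|U^{(1)}_{i}|\ (\forall i),\quad \sum_{i}x_{i,j}\le|U^{(2)}_{j}|\ (\forall j),\quad x_{i,j}\in\{0,1,2,\dots\},
\]
where $f(x)=\sum x_{i,j}x_{i',j'}+\sum\binom{x_{i,j}}{2}$, the first sum ranging over unordered pairs of distinct compatible cells and the second over cells compatible with themselves. One inequality is immediate: from the $x$ read off a common subgraph $H$ we get $|E(H)|\le f(x)$, since $H$ only has edges between compatible cells. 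For the reverse inequality, given any feasible~$x$, I would take the blow-up that turns every self-compatible cell into a clique and inserts \emph{all} edges between the two sides of every compatible pair of distinct cells; this is realized as a common subgraph by sending the (at most $|U^{(1)}_{i}|$) vertices placed in row~$i$ to distinct vertices of $U^{(1)}_{i}$ and, symmetrically, the vertices placed in column~$j$ to distinct vertices of $U^{(2)}_{j}$, so that every edge of the blow-up lands inside a clique or inside a full connection of the target graph; the resulting common subgraph has exactly $f(x)$ edges.

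It remains to solve the program, and this is where I expect the main obstacle to lie. The program has $pq\le(\nd(G_{1})+\nd(G_{2}))^{2}$ variables, $p+q$ linear constraints (besides nonnegativity) with polynomially bounded integer data, and a degree-$2$ objective that is nonnegative on the feasible region; so it can be solved in FPT time by the known algorithm for integer polynomial optimization of bounded degree in bounded dimension, and we accept iff the optimum is at least~$h$ (an optimal $x$ yielding an explicit solution via the blow-up above). The quadratic objective is the genuine new difficulty compared with \cref{thm:mcis-nd}: because we count edges, $f$ is an indefinite quadratic form plus a linear term, and its maximum over the (integral, transportation-type) feasible polytope need not lie at a vertex (already for $G_{1}=K_{4}$ and $G_{2}=K_{3,3}$ the program reduces to $\max\{x_{1}x_{2}\mid x_{1}+x_{2}\le 4,\ 0\le x_{1},x_{2}\le 3\}$, whose integer optimum $(2,2)$ is the midpoint of an edge of the polytope), so neither vertex enumeration nor Lenstra's linear algorithm suffices and one really needs integer quadratic (polynomial) programming in bounded dimension. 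A further point to treat carefully is the non-obvious direction of the structural identity, i.e., checking that inserting \emph{all} compatible edges is simultaneously embeddable into $G_{1}$ and into $G_{2}$.
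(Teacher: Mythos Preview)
Your proposal is correct and follows essentially the same route as the paper: set up nonnegative integer variables $x_{i,j}$ with the transportation-type capacity constraints, express the edge count as the quadratic $\sum\binom{x_{i,j}}{2}+\sum x_{i,j}x_{i',j'}$ over compatible (adjacent) pairs, and solve the resulting integer quadratic program whose number of variables and objective/constraint coefficients are bounded in the parameter. The paper invokes the IQP result of Lokshtanov and Zemmer (fixed-parameter tractable in the number of variables \emph{plus} the maximum absolute coefficient in $Q$, $A$, $c$), which applies here because all those coefficients are $0$ or $1$ after doubling the objective; your appeal to ``integer polynomial optimization of bounded degree in bounded dimension'' should be sharpened to that specific statement, but otherwise the arguments coincide, and your explicit verification of the reverse direction of the structural identity is a detail the paper leaves implicit.
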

\begin{proof}
We reduce {\probMCSshort} parameterized by neighborhood diversity 
to \textsc{Integer Quadratic Programming} (IQP) parameterized by the number of variables.
Given $Q \in \mathbb{Z}^{n \times n}$, $c \in \mathbb{Z}^{n}$, $A \in \mathbb{Z}^{m \times n}$, $b \in \mathbb{Z}^{m}$,
IQP asks to find a vector $x \in \mathbb{Z}^{n}$ that maximizes $x^{T} Q x + c^{T} x$ subject to $A x \le b$.
It is known that IQP parameterized by the number of variables (i.e., $n$) plus the maximum absolute value of a coefficient in~$A$, $Q$, and $c$
is fixed-parameter tractable~\cite{Lokshtanov15,Zemmer17}.\footnote{
Although the objective function of IQP is usually represented by a quadratic form $x^{T} Q x$ only,
having a linear term $c^{T} x$ can be done without changing the fixed-parameter tractability. See the discussion by Lokshtanov~\cite[p.~4]{Lokshtanov15}.}

Let $\langle G_{1}, G_{2}, h \rangle$ be an instance of {\probMCS}.
Let $U^{(1)}_{1}$, $\ldots$, $U^{(1)}_{p}$ and $U^{(2)}_{1}, \dots, U^{(2)}_{q}$ be the twin classes of $G_{1}$ and $G_{2}$, respectively.
We define adjacent pairs of twin classes as in the same way in the proof of \cref{thm:mcis-nd}.
We are going to find a maximum common subgraph $H$ of $G_{1}$ and~$G_{2}$ 
with a subgraph isomorphism $\eta_{i}$ from $H$ to $G_{i}$ for each $i \in \{1,2\}$.
For $i \in [p]$ and $j \in [q]$, let $X_{i,j} = V(H) \cap \eta_{1}^{-1}(U^{(1)}_{i}) \cap \eta_{2}^{-1}(U^{(2)}_{j})$.
For $i \in [p]$ and $j \in [q]$, we take a non-negative integer variable $x_{i,j}$ that represents $|X_{i,j}|$.
We add the size constraints $\sum_{j \in [q]} x_{i,j} \le |U^{(1)}_{i}|$ for all $i \in [p]$
and $\sum_{i \in [p]} x_{i,j} \le |U^{(2)}_{j}|$ for all~$j \in [q]$.

Observe that the 
number of edges in $H[X_{i,j}]$ is $\binom{x_{i,j}}{2}$ if both $U^{(1)}_{i}$ and $U^{(2)}_{j}$ are cliques, and $0$ otherwise. 
Also, for $(i,j) \ne (i', j')$, the number of edges between~$X_{i,j}$ and $X_{i',j'}$ in $H$ is $x_{i,j} \cdot x_{i',j'}$
if both $(U^{(1)}_{i}, U^{(1)}_{i'})$ and $(U^{(2)}_{j}, U^{(2)}_{j'})$ are adjacent pairs, and $0$ otherwise.
In total, the number of edges in~$H$ is
\[
|E(H)| = \sum_{\textrm{cliques } U^{(1)}_{i},\; U^{(2)}_{j}} \binom{x_{i,j}}{2}
+ \sum_{\substack{\textrm{adjacent pairs } (U^{(1)}_{i}, U^{(1)}_{i'}), \; (U^{(2)}_{j}, U^{(2)}_{j'}), \\ (i,j) \ne (i',j')}} x_{i,j} \cdot x_{i',j'}.
\]
Recall that when a twin class $W$ is a clique, then $(W,W)$ is an adjacent pair.
To make the coefficients integral, we set the objective function to $2|E(H)|$. 

{\spaceskip=3pt plus 1pt The number of variables in the obtained IQP instance is $pq$
and the maximum absolute value in the objective function and the left-hand side of the constraints is constant. This implies the theorem.}
\end{proof}


\section{Conclusion}

In this paper, we showed fixed-parameter tractable cases for {\probMCSshort} and {\probMCISshort}.
Given our results, the parameterized complexity of these problems with respect to well-studied structural parameters is almost completely understood (see \cref{fig:graph-parameters}).
Filling the missing part (i.e., {\probMCISshort} parameterized by $\cvd(G_{1})+\cvd(G_{2})$) would be the natural next step.

\appendix

\section{Tight bound on the number of non-degree-2 vertices}

In this section, we prove an upper bound on the number of non-degree-2 vertices in terms of the max-leaf number.
Our bound is tight for paths with two or more vertices (and more generally, for path forests with no isolated vertices).

\begin{lemma}
\label{lem:deg-2-bound-mln}
For every graph $G$ with at least two vertices, $|V_{\ne 2}(G)| \le 4 \ml(G)-6$.
\end{lemma}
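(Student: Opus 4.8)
The plan is to show that $|V_{\ne 2}(G)| \le 4\ml(G) - 6$ by reducing, via the smoothing operation, to a multigraph in which every vertex has degree at least $3$, and then applying a counting argument of Kleitman and West style.

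First I would reduce to the connected case: both sides of the inequality are additive over connected components (with $\ml$ additive by definition, and $V_{\ne 2}$ trivially additive), and since we assume $|V(G)| \ge 2$, every component has at least two vertices, so it suffices to prove $|V_{\ne 2}(C)| \le 4\ml(C) - 6$ for each connected component $C$ with $|V(C)| \ge 2$. (One has to check the base cases carefully: a single edge $K_2$ has $\ml = 2$ and $|V_{\ne 2}| = 2 = 4\cdot 2 - 6$, tight; a path $P_n$ with $n \ge 2$ has $\ml = 2$ and $|V_{\ne 2}| = 2$; a cycle has $V_{\ne 2} = \emptyset$.) Next I would pass to the smoothing $G'$ of $G$ (contracting degree-2 vertices), noting that smoothing preserves the max-leaf number when $G$ is connected and not a cycle — contracting a degree-2 vertex does not change the number of leaves achievable in a spanning tree — and that $V_{\ne 2}(G)$ is exactly $V(G')$ minus any isolated loops; so the task becomes bounding $|V(G')|$ for a connected multigraph $G'$ with minimum degree at least $3$ (loops counting $2$ toward degree), in terms of $\ml(G')$.

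The heart of the argument is the standard bound: for a connected (multi)graph with minimum degree $\ge 3$, $\ml \ge |V|/4 + C$ for a suitable additive constant, which follows from the Kleitman–West theorem (a connected graph with $n$ vertices of degree $\ge 3$ has a spanning tree with at least $n/4 + 2$ leaves). Concretely, I would invoke that a spanning tree $T$ of $G'$ with the maximum number $\ml(G')$ of leaves has at least $(n+6)/4$ leaves where $n = |V(G')|$ if $\delta(G') \ge 3$; rearranging gives $n \le 4\ml(G') - 6$. Combined with $|V_{\ne 2}(G)| \le |V(G')| = n$ (equality unless there are isolated loops, which only helps) and $\ml(G') = \ml(G)$, this yields $|V_{\ne 2}(G)| \le 4\ml(G) - 6$.

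The main obstacle I anticipate is handling the degenerate and boundary cases cleanly: when $G'$ has vertices of degree exactly $2$ arising from loops or parallel edges in a way the Kleitman–West bound needs care about, when $G$ is a cycle (so $G'$ is a single looped vertex and $V_{\ne 2}(G) = \emptyset$, vacuously fine), and when components are too small for the generic bound (e.g.\ $K_2$, paths), where the constant $-6$ must be verified directly rather than through the asymptotic estimate. I would also need to make precise the version of the Kleitman–West / spanning-tree-leaf lemma I am citing — whether it is stated for simple graphs only, in which case a short argument is needed to extend it to the multigraph $G'$ (e.g.\ by observing that subdividing a parallel edge or loop once more returns to a simple graph of minimum degree $\ge 3$ with the same max-leaf number and one extra vertex, which would only weaken the bound, so the simple-graph version suffices after a careful accounting). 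Once these edge cases are dispatched, the inequality follows, and the tightness for path forests with no isolated vertices is immediate from the base-case computation above.
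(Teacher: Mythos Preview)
Your central reduction is flawed: smoothing removes only the degree-$2$ vertices, so every degree-$1$ vertex of $G$ survives in $G'$ with degree~$1$. Hence $G'$ need \emph{not} have minimum degree at least~$3$, and the Kleitman--West bound $\ml \ge n/4 + 2$ (which requires $\delta \ge 3$) does not apply. Take for instance a star $K_{1,3}$, or a cycle with one pendant edge: after smoothing you still have a degree-$1$ vertex. This is not a boundary case you can dispatch by a base-case check; it is the generic situation whenever $G$ has any leaves, and it is exactly where the work lies. The paper addresses this by re-running the Kleitman--West expansion argument with an additional case (``Case~3'') handling the moment when the only out-neighbour of a live leaf is a degree-$1$ vertex of $G$, and by starting the tree at a neighbour of a degree-$1$ vertex to secure the constant~$-6$. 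Your outline contains no analogue of this step.

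Two smaller issues: your additivity reduction assumes every component has at least two vertices, but $|V(G)|\ge 2$ does not preclude isolated vertices (each contributes~$1$ to both $|V_{\ne 2}|$ and $\ml$, so they must be treated separately, as the paper does). Also your statement of Kleitman--West is internally inconsistent: $n/4+2$ equals $(n+8)/4$, not $(n+6)/4$, and gives $n\le 4\ml-8$, which is strictly stronger than $4\ml-6$ --- but again only available when $\delta\ge 3$.
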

\begin{proof}
	As a first step, we prove the statement for connected graphs without degree-2 vertices. Namely, we prove the following claim.
	\begin{myclaim}
		\label{clm:mln-deg2-bound-solvingdeg2}
		If a connected graph \(G\) without degree-2 vertices has at least two vertices, then \(|V(G)| \le 4 \ml(G)-6\).
	\end{myclaim}
	\begin{subproof}
		[\cref{clm:mln-deg2-bound-solvingdeg2}]
		We assume that $G$ has maximum degree at least~$3$ as otherwise $G$ is $K_{2}$.
		We also assume that $G$ has degree-$1$ vertices since otherwise we can directly apply 
		the result of Kleitman and West~\cite[Theorem~2]{KleitmanW91}, 
		who showed that $|V(G)| \le 4\ml(G) - 8$ if $G$ has minimum degree at least~$3$.
		We closely follow their proof and make some minor modifications to handle degree-$1$ vertices.
		
		We begin with a small subtree $T$ of $G$ and expand $T$ iteratively to obtain a spanning tree of $G$.
		We denote by $n$ and $\ell$ the (current) numbers of vertices and leaves of $T$, respectively.
		For a leaf $x$ of $T$, let~$d'(x)$ denote the \emph{out-degree} $|N_{G}(x) \setminus V(T)|$ of $x$. Namely
		$d'(x)$ is the number of neighbors of $x$ not in $T$.
		The \emph{expansion} at $x$ enlarges $T$ by adding the $d'(x)$ edges from $x$ to all its neighbors not in $T$.
		We grow $T$ by a sequence of expansions. This guarantees that $d'(v) = 0$ for each inner vertex $v$ of $T$.
		In other words, only leaves of $T$ may be adjacent to vertices not in $T$. We say that a leaf $x$ of $T$ is \emph{dead} if $d'(x) = 0$. Let $m$ be the number of dead leaves in $T$.
		An \emph{admissible operation} is a single expansion or a sequence of two expansions that satisfies
		\begin{equation}
			3 \D \ell + \D m \ge \D n,
			\label{eq:augmentation-ineq}
		\end{equation}
		where $\D \ell$, $\D m$, and $\D n$ are the increases of the numbers of leaves, dead leaves, and vertices, respectively, by the operation.
		We call \cref{eq:augmentation-ineq} the \emph{augmentation inequality}.
		
		\paragraph{Initializing $T$.}
		
		Let $u$ be a degree-$1$ vertex of $G$ and $v$ be its neighbor.
		Observe that $\deg_{G}(v) \ge 3$ as~$G \ne K_{2}$.
		Let $\mu$ be the number of degree-$1$ vertices adjacent to $v$.
		We initialize $T$ as the star centered at $v$ with all its neighbors as the leaves.
		Now $\ell = \deg_{G}(v)$, $m = \mu \ge 1$, and $n = \deg_{G}(v) + 1$.
		
		\paragraph{Final form of $T$.}
		Assume that we successfully applied admissible operations to make $T$ a spanning tree of $G$ with $L$ leaves.
		Now all leaves of $T$ are dead and~$n = |V(G)|$.
		By summing the augmentation inequality for all operations that we applied, we get
		\[
		3 (L-\deg_{G}(v)) + (L-\mu) \ge |V(G)| - (\deg_{G}(v) + 1),
		\]
		which simplifies to $|V(G)| \le 4L -2\deg_{G}(v) -\mu +1$.
		Since $\deg_{G}(v) \ge 3$ and~$\mu \ge 1$,  we have $|V(G)| \le 4L-6$.

		\paragraph{Analyzing each step.}
		It remains to show that we can always apply an admissible operation to $T$ unless it is already a spanning tree (i.e., as long as there is a leaf with positive out-degree).
		
		\textit{Case~1:} If there is a leaf $x$ with $d'(x) \ge 2$, then the expansion at $x$ yields $\D \ell = \D n -1 = d'(x) - 1\ge 1$ and $\D m \ge 0$.
		
		In the following cases, we assume that all leaves of $T$ have out-degree at most~$1$.
		
		\textit{Case~2:} Assume there is a leaf $x$ with $d'(x) = 1$ such that its neighbor $y$ not in $T$ has degree at least~$3$ in~$G$.
		\begin{itemize}
			\item Case~2(a): If $y$ has at least two neighbors in $T$,
			then the expansion at $x$ yields $\D \ell = 0$, $\D m \ge 1$, and $\D n = 1$.
			To see that $\D m \ge 1$, observe that the neighbors of $y$ in $T$, except for $x$, become dead leaves after this expansion.
			(Recall that only leaves of $T$ may be adjacent to vertices not in $T$.)
			
			\item Case~2(b): If $y$ has at least two neighbors not in $T$,
			then the sequence of expansions at $x$ and then at $y$ yields $\D \ell = \D n - 2 \ge 1$ and $\D m \ge 0$.
		\end{itemize}
		
		\textit{Case~3:} If there is a leaf $x$ with $d'(x) = 1$ such that its neighbor $y$ not in $T$ has degree~$1$ in~$G$, 
		then the expansion at $x$ yields $\D \ell = 0$, $\D m = 1$, and $\D n = 1$.
		
		This shows that there always exists a sequence of admissible operations to make $T$ a spanning tree of $G$.
	\end{subproof}
	 
	 For connected graphs, to deal with degree-2 vertices, we do an induction on the number of vertices and edges in the graph. Let \(G\) be a connected graph on~\(n\) vertices and \(m\) edges. Assume \(G\) has a degree-2 vertex \(v\), as otherwise the statement follows from \cref{clm:mln-deg2-bound-solvingdeg2}.
	 
	 \textit{Case 1:} Assume \(v\) is a vertex-separator of \(G\). \\
	 In this case, \(v\) is part of a bridge (i.e., an edge whose removal separates the graph) \(\{u,v\} \in E(G)\).
	 Contract the edge \(\{u,v\}\) to obtain a new connected graph \(H\) with \(n-1\) vertices. Then \(\ml(G) = \ml(H)\), since every spanning tree in \(G\) contains \(\{u,v\}\) and thus, every spanning tree in \(H\) corresponds to one in \(G\) with the same number of leaves. Furthermore, \(|V_{\ne 2}(G)| = |V_{\ne 2}(H)|\) since \(\deg_G(u) = \deg_H(w_{uv})\) and \(\deg_G(v) = 2\), where $w_{uv}$ is the vertex that replaced the edge $\{u,v\}$.  Thus, by applying the induction hypothesis we obtain 
	 \[
	 |V_{\ne 2}(G)| = |V_{\ne 2}(H)| \leq 4\ml(H) -6  = 4\ml(G) -6.
	 \]
	 
	 \textit{Case 2:} Assume \(v\) is not a vertex-separator of \(G\).  \\
	 Let \(u\) be one of the two neighbors of \(v\). Delete the edge \(\{u,v\}\) from \(G\) to obtain a new graph \(H\) with \(m-1\) edges. Then \(|V_{\ne 2}(H)| \in \{|V_{\ne 2}(G)|, |V_{\ne 2}(G)|+1\}\), since~\(v\) has degree 2 in \(G\) and degree 1 in \(H\), and \(u\) has degree \(d>1\) in \(G\) and degree~\(d-1\) in~\(H\). In particular, \(|V_{\ne 2}(H)| \geq |V_{\ne 2}(G)|\). Since \(H\) is a connected subgraph of \(G\), we have \(\ml(H) \leq \ml(G)\). Thus, by applying the induction hypothesis we obtain
	 \[
	 |V_{\ne 2}(G)| \leq |V_{\ne 2}(H)| \leq 4\ml(H)-6 \leq 4\ml(G)-6.
	 \]
	 This finishes the proof of the statement for connected graphs. 

Assume that $G$ is disconnected.
Let $\cc_{2}(G)$ be the set of connected components of $G$ with at least two vertices.
If $\cc_{2}(G) = \emptyset$, then $|V_{\ne 2}(G)| = |V(G)| = \ml(G) \le 4 \ml(G)-6$,
where the last inequality holds since $\ml(G) = |V(G)| \ge 2$.
If $\cc_{2}(G) \ne \emptyset$, then we can apply the lemma for the connected case to each nontrivial component in $\cc_{2}(G)$ as follows:
\begin{align*}
  |V_{\ne 2}(G)| 
  &= 
  \sum_{C \in \cc(G)} |V_{\ne 2}(C)| \le \sum_{C \in \cc_{2}(G)} (4\ml(C)-6) + |\cc(G) \setminus \cc_{2}(G)|
  \\ 
  &\le 
  4 \Bigg(\sum_{C \in \cc(G)} \ml(C) \Bigg) -6 = 4\ml(G)-6.
\end{align*}
This completes the proof.
\end{proof}


%
%
\bibliographystyle{plainurl}
\bibliography{ref}

@book{GareyJ79,
  author    = {Michael R. Garey and
               David S. Johnson},
  title     = {Computers and Intractability: {A} Guide to the Theory of {NP}-Completeness},
  publisher = {W. H. Freeman},
  year      = {1979},
  isbn      = {0-7167-1044-7},
}

@misc{SorgeW19,
  author    = {Manuel Sorge and Mathias Weller},
  title     = {The Graph Parameter Hierarchy},
  year      = {2019},
  url       = {https://manyu.pro/assets/parameter-hierarchy.pdf},
}

@book{CyganFKLMPPS15,
  author    = {Marek Cygan and
               Fedor V. Fomin and
               {\L}ukasz Kowalik and
               Daniel Lokshtanov and
               D{\'{a}}niel Marx and
               Marcin Pilipczuk and
               Micha{\l} Pilipczuk and
               Saket Saurabh},
  title     = {Parameterized Algorithms},
  publisher = {Springer},
  year      = {2015},
  doi       = {10.1007/978-3-319-21275-3}
}

@book{DowneyF99,
  author       = {Rodney G. Downey and
                  Michael R. Fellows},
  title        = {Parameterized Complexity},
  publisher    = {Springer},
  year         = {1999},
  doi          = {10.1007/978-1-4612-0515-9},
}

@book{DowneyF13,
  author    = {Rodney G. Downey and
               Michael R. Fellows},
  title     = {Fundamentals of Parameterized Complexity},
  publisher = {Springer},
  year      = {2013},
  doi       = {10.1007/978-1-4471-5559-1},
}

@book{FlumG06,
  author    = {J{\"{o}}rg Flum and
               Martin Grohe},
  title     = {Parameterized Complexity Theory},
  publisher = {Springer},
  year      = {2006},
  doi       = {10.1007/3-540-29953-X},
}

@book{Niedermeier06,
  author    = {Rolf Niedermeier},
  title     = {Invitation to Fixed-Parameter Algorithms},
  publisher = {Oxford University Press},
  year      = {2006},
  doi       = {10.1093/ACPROF:OSO/9780198566076.001.0001},
}

@article{Lampis12,
  author       = {Michael Lampis},
  title        = {Algorithmic Meta-theorems for Restrictions of Treewidth},
  journal      = {Algorithmica},
  volume       = {64},
  number       = {1},
  pages        = {19--37},
  year         = {2012},
  doi          = {10.1007/s00453-011-9554-x},
}

@article{HuffnerKMN10,
  author    = {Falk H{\"{u}}ffner and
               Christian Komusiewicz and
               Hannes Moser and
               Rolf Niedermeier},
  title     = {Fixed-Parameter Algorithms for Cluster Vertex Deletion},
  journal   = {Theory Comput. Syst.},
  volume    = {47},
  number    = {1},
  pages     = {196--217},
  year      = {2010},
  doi       = {10.1007/s00224-008-9150-x},
}

@inproceedings{JansenM15,
  author       = {Bart M. P. Jansen and
                  D{\'{a}}niel Marx},
  title        = {Characterizing the easy-to-find subgraphs from the viewpoint of polynomial-time
                  algorithms, kernels, and Turing kernels},
  booktitle    = {{SODA} 2015},
  pages        = {616--629},
  year         = {2015},
  doi          = {10.1137/1.9781611973730.42},
}

@inproceedings{MarxP14,
  author       = {D{\'{a}}niel Marx and
                  Micha{\l} Pilipczuk},
  title        = {Everything you always wanted to know about the parameterized complexity
                  of Subgraph Isomorphism (but were afraid to ask)},
  booktitle    = {{STACS} 2014},
  series       = {LIPIcs},
  volume       = {25},
  pages        = {542--553},
  publisher    = {Schloss Dagstuhl - Leibniz-Zentrum f{\"{u}}r Informatik},
  year         = {2014},
  doi          = {10.4230/LIPICS.STACS.2014.542},
}

@article{BodlaenderHKKOOZ20,
  author       = {Hans L. Bodlaender and
                  Tesshu Hanaka and
                  Yasuaki Kobayashi and
                  Yusuke Kobayashi and
                  Yoshio Okamoto and
                  Yota Otachi and
                  Tom C. van der Zanden},
  title        = {Subgraph Isomorphism on Graph Classes that Exclude a Substructure},
  journal      = {Algorithmica},
  volume       = {82},
  number       = {12},
  pages        = {3566--3587},
  year         = {2020},
  doi          = {10.1007/S00453-020-00737-Z},
}

@article{KijimaOSU12,
  author       = {Shuji Kijima and
                  Yota Otachi and
                  Toshiki Saitoh and
                  Takeaki Uno},
  title        = {Subgraph isomorphism in graph classes},
  journal      = {Discret. Math.},
  volume       = {312},
  number       = {21},
  pages        = {3164--3173},
  year         = {2012},
  doi          = {10.1016/J.DISC.2012.07.010},
}

@inproceedings{Damaschke90,
  author       = {Peter Damaschke},
  title        = {Induced Subgraph Isomorphism for Cographs in {NP}-Complete},
  booktitle    = {{WG} 1990},
  series       = {Lecture Notes in Computer Science},
  volume       = {484},
  pages        = {72--78},
  year         = {1990},
  doi          = {10.1007/3-540-53832-1_32},
}

@article{HeggernesHMV15,
  author       = {Pinar Heggernes and
                  Pim van 't Hof and
                  Daniel Meister and
                  Yngve Villanger},
  title        = {Induced Subgraph Isomorphism on proper interval and bipartite permutation
                  graphs},
  journal      = {Theor. Comput. Sci.},
  volume       = {562},
  pages        = {252--269},
  year         = {2015},
  doi          = {10.1016/J.TCS.2014.10.002},
}

@article{Abu-Khzam14,
  author       = {Faisal N. Abu{-}Khzam},
  title        = {Maximum common induced subgraph parameterized by vertex cover},
  journal      = {Inf. Process. Lett.},
  volume       = {114},
  number       = {3},
  pages        = {99--103},
  year         = {2014},
  doi          = {10.1016/J.IPL.2013.11.007},
}

@article{Abu-KhzamBS17,
  author       = {Faisal N. Abu{-}Khzam and
                  {\'{E}}douard Bonnet and
                  Florian Sikora},
  title        = {On the complexity of various parameterizations of common induced subgraph
                  isomorphism},
  journal      = {Theor. Comput. Sci.},
  volume       = {697},
  pages        = {69--78},
  year         = {2017},
  doi          = {10.1016/J.TCS.2017.07.010},
}

@article{GimaHKKO22,
  author       = {Tatsuya Gima and
                  Tesshu Hanaka and
                  Masashi Kiyomi and
                  Yasuaki Kobayashi and
                  Yota Otachi},
  title        = {Exploring the gap between treedepth and vertex cover through vertex
                  integrity},
  journal      = {Theor. Comput. Sci.},
  volume       = {918},
  pages        = {60--76},
  year         = {2022},
  doi          = {10.1016/J.TCS.2022.03.021},
}

@article{BodlaenderHJOOZ25,
  author       = {Hans L. Bodlaender and
                  Tesshu Hanaka and
                  Lars Jaffke and
                  Hirotaka Ono and
                  Yota Otachi and
                  Tom C. van der Zanden},
  title        = {Hedonic seat arrangement problems},
  journal      = {Auton. Agents Multi Agent Syst.},
  volume       = {39},
  number       = {2},
  pages        = {33},
  year         = {2025},
  doi          = {10.1007/S10458-025-09711-X},
}

@article{Lenstra83,
  author       = {Lenstra, Jr., Hendrik W.},
  title        = {Integer Programming with a Fixed Number of Variables},
  journal      = {Math. Oper. Res.},
  volume       = {8},
  number       = {4},
  pages        = {538--548},
  year         = {1983},
  doi          = {10.1287/MOOR.8.4.538},
}

@inproceedings{FellowsLMRS08,
  author       = {Michael R. Fellows and
                  Daniel Lokshtanov and
                  Neeldhara Misra and
                  Frances A. Rosamond and
                  Saket Saurabh},
  title        = {Graph Layout Problems Parameterized by Vertex Cover},
  booktitle    = {{ISAAC} 2008},
  series       = {Lecture Notes in Computer Science},
  volume       = {5369},
  pages        = {294--305},
  year         = {2008},
  doi          = {10.1007/978-3-540-92182-0_28},
}

@article{Lokshtanov15,
  author       = {Daniel Lokshtanov},
  title        = {Parameterized Integer Quadratic Programming: Variables and Coefficients},
  journal      = {CoRR},
  volume       = {abs/1511.00310},
  year         = {2015},
  eprinttype    = {arXiv},
  eprint       = {1511.00310},
}

@phdthesis{Zemmer17,
  title={Integer polynomial optimization in fixed dimension},
  author={Kevin Zemmer},
  year={2017},
  school={ETH Zurich},
  doi={10.3929/ethz-b-000241796},
}

@article{KleitmanW91,
  author       = {Daniel J. Kleitman and
                  Douglas B. West},
  title        = {Spanning Trees with Many Leaves},
  journal      = {{SIAM} J. Discret. Math.},
  volume       = {4},
  number       = {1},
  pages        = {99--106},
  year         = {1991},
  doi          = {10.1137/0404010},
}

@article{FellowsLMMRS09,
  author       = {Michael R. Fellows and
                  Daniel Lokshtanov and
                  Neeldhara Misra and
                  Matthias Mnich and
                  Frances A. Rosamond and
                  Saket Saurabh},
  title        = {The Complexity Ecology of Parameters: An Illustration Using Bounded
                  Max Leaf Number},
  journal      = {Theory Comput. Syst.},
  volume       = {45},
  number       = {4},
  pages        = {822--848},
  year         = {2009},
  doi          = {10.1007/S00224-009-9167-9},
}

@article{Ganian15,
  author       = {Robert Ganian},
  title        = {Improving Vertex Cover as a Graph Parameter},
  journal      = {Discret. Math. Theor. Comput. Sci.},
  volume       = {17},
  number       = {2},
  pages        = {77--100},
  year         = {2015},
  doi          = {10.46298/DMTCS.2136},
}

@inproceedings{Gabow83,
  author       = {Harold N. Gabow},
  title        = {An Efficient Reduction Technique for Degree-Constrained Subgraph and
                  Bidirected Network Flow Problems},
  booktitle    = {STOC 1983},
  pages        = {448--456},
  year         = {1983},
  doi          = {10.1145/800061.808776},
}

@article{Zehavi18,
  author       = {Meirav Zehavi},
  title        = {The $k$-leaf spanning tree problem admits a klam value of 39},
  journal      = {Eur. J. Comb.},
  volume       = {68},
  pages        = {175--203},
  year         = {2018},
  doi          = {10.1016/J.EJC.2017.07.018},
}


\end{document}